\let\proof\relax   
\newtheorem{lemma}{Lemma}
\newtheorem{theorem}{Theorem}
\newtheorem{remark}{Remark}
\newcommand*{\transpose}{%
  {\mathpalette\@transpose{}}%
}
\begin{document}

\makeatletter
\newcommand{\raisemath}[1]{\mathpalette{\raisem@th{#1}}}
\newcommand{\raisem@th}[3]{\raisebox{#1}{$#2#3$}}
\makeatother

\newcommand{\mstk}{\hspace{-0.145cm}*}

\newcommand{\mstl}{\hspace{-0.105cm}*}

\newcommand{\mstm}{\hspace{-0.175cm}*}

\newcommand{\SB}[3]{
\sum_{#2 \in #1}\biggl|\overline{X}_{#2}\biggr| #3
\biggl|\bigcap_{#2 \notin #1}\overline{X}_{#2}\biggr|
}

\newcommand{\Mod}[1]{\ (\textup{mod}\ #1)}

\newcommand{\overbar}[1]{\mkern 0mu\overline{\mkern-0mu#1\mkern-8.5mu}\mkern 6mu}

\makeatletter
\newcommand*\nss[3]{%
  \begingroup
  \setbox0\hbox{$\m@th\scriptstyle\cramped{#2}$}%
  \setbox2\hbox{$\m@th\scriptstyle#3$}%
  \dimen@=\fontdimen8\textfont3
  \multiply\dimen@ by 4             
  \advance \dimen@ by \ht0
  \advance \dimen@ by -\fontdimen17\textfont2
  \@tempdima=\fontdimen5\textfont2  
  \multiply\@tempdima by 4
  \divide  \@tempdima by 5          
  \ifdim\dimen@<\@tempdima
    \ht0=0pt                        
    \@tempdima=\fontdimen5\textfont2
    \divide\@tempdima by 4          
    \advance \dimen@ by -\@tempdima 
    \ifdim\dimen@>0pt
      \@tempdima=\dp2
      \advance\@tempdima by \dimen@
      \dp2=\@tempdima
    \fi
  \fi
  #1_{\box0}^{\box2}%
  \endgroup
  }
\makeatother

\makeatletter
\renewenvironment{proof}[1][\proofname]{\par
  \pushQED{\qed}%
  \normalfont \topsep6\p@\@plus6\p@\relax
  \trivlist
  \item[\hskip\labelsep
        \itshape
    #1\@addpunct{:}]\ignorespaces
}{%
  \popQED\endtrivlist\@endpefalse
}
\makeatother

\makeatletter
\newsavebox\myboxA
\newsavebox\myboxB
\newlength\mylenA

\newcommand*\xoverline[2][0.75]{%
    \sbox{\myboxA}{$\m@th#2$}%
    \setbox\myboxB\null
    \ht\myboxB=\ht\myboxA%
    \dp\myboxB=\dp\myboxA%
    \wd\myboxB=#1\wd\myboxA
    \sbox\myboxB{$\m@th\overline{\copy\myboxB}$}
    \setlength\mylenA{\the\wd\myboxA}
    \addtolength\mylenA{-\the\wd\myboxB}%
    \ifdim\wd\myboxB<\wd\myboxA%
       \rlap{\hskip 0.5\mylenA\usebox\myboxB}{\usebox\myboxA}%
    \else
        \hskip -0.5\mylenA\rlap{\usebox\myboxA}{\hskip 0.5\mylenA\usebox\myboxB}%
    \fi}
\makeatother

\xpatchcmd{\proof}{\hskip\labelsep}{\hskip3.75\labelsep}{}{}

\pagestyle{plain}

\title{\fontsize{21}{28}\selectfont The Role of Reusable and Single-Use Side Information in Private Information Retrieval}

\author{Anoosheh Heidarzadeh and Alex Sprintson\thanks{The authors are with the Department of Electrical and Computer Engineering, Texas A\&M University, College Station, TX 77843 USA (E-mail: \{anoosheh, spalex\}@tamu.edu).}
}

%

%


\maketitle 

\thispagestyle{plain}

\begin{abstract}
This paper introduces the problem of Private Information Retrieval with Reusable and Single-use Side Information (PIR-RSSI). 
In this problem, one or more remote servers store identical copies of a set of $K$ messages, and there is a user that initially knows $M$ of these messages, and wants to privately retrieve one other message from the set of $K$ messages.
The objective is to design a retrieval scheme in which the user downloads the minimum amount of information from the server(s) while the identity of the message wanted by the user and the identities of an $M_1$-subset of the $M$ messages known by the user (referred to as reusable side information) are protected, but the identities of the remaining $M_2=M-M_1$ messages known by the user (referred to as single-use side information) do not need to be protected. 
The PIR-RSSI problem reduces to the classical Private Information Retrieval (PIR) problem when ${M_1=M_2=0}$, and reduces to the problem of PIR with Private Side Information or PIR with Side Information when ${M_1\geq 1,M_2=0}$ or ${M_1=0,M_2\geq 1}$, respectively. 
In this work, we focus on the single-server setting of the PIR-RSSI problem. 
We characterize the capacity of this setting for the cases of ${M_1=1,M_2\geq 1}$ and ${M_1\geq 1,M_2=1}$, where the capacity is defined as the maximum achievable download rate over all PIR-RSSI schemes. 
Our results show that for sufficiently small values of $K$, the single-use side information messages can help in reducing the download cost only if they are kept private; and for larger values of $K$, the reusable side information messages cannot help in reducing the download cost.

\end{abstract}

\section{Introduction}
In this work, we introduce the problem of \emph{Private Information Retrieval with Reusable and Single-use Side Information (PIR-RSSI)}.
In this problem, there is a single (or multiple) remote server(s) storing (identical copies of) $K$ messages, and there is a user that is interested in privately retrieving one message from the set of $K$ messages. 
The user has a prior side information about a subset of messages. 
In particular, the user initially knows $M$ messages from the set of $K$ messages, distinct from the message required by the user, and the servers do not initially know which $M$-subset of messages is known by the user. 
The goal of the user is to privately retrieve their desired message by downloading the minimum amount of information from the server(s) 
while protecting the identities of some of their side information messages. 
In particular, the identities of  $M_1$ (out of $M$) side information messages must be protected, 
whereas the identities of the remaining $M_2:=M-M_1$ side information messages do not need to be protected.
We refer to the $M_1$ side information messages whose identities need to be protected as \emph{reusable side information (RSI)}, and refer to the remaining $M_2$ side information messages as \emph{single-use side information (SSI)}.   

The PIR-RSSI problem is motivated by several practical scenarios. 
For example, the user may want to use some part of their side information to minimize the download cost in one round of retrieval while keeping the other part of their side information reusable for future rounds. 
As another example, consider a scenario in which the server stores a dataset that contains information about individuals including the user themselves, and the user's side information consists of their own data and the data pertaining to some other individuals. 
In this scenario, the identity of the user's own data---which is part of the user's side information---needs to be kept private in order to protect the user's identity.

The PIR-RSSI problem is a generalization of the classical Private Information Retrieval (PIR) problem~\cite{SJ2017,TSC2019}, the problem of PIR with Private Side Information (PIR-PSI)~\cite{KGHERS2020,CWJ2020}, and the problem of PIR with Side Information (PIR-SI)~\cite{KGHERS2020,LG2020CISS}.
In particular, the PIR-RSSI problem with ${M_1=0,M_2=0}$ is equivalent to the classical PIR problem; the PIR-RSSI problem with $M_1\geq 1, M_2=0$ is equivalent to the PIR-PSI problem with $M_1$ side information messages, and the PIR-RSSI problem with $M_1=0,M_2\geq 1$ is equivalent to the PIR-SI problem with $M_2$ side information messages. 
This is because in the PIR-PSI problem, the identities of all side information messages must be kept private, and in the PIR-SI problem, the identity of none of the side information messages needs to be kept private.
Similar to PIR-PSI and PIR-SI, several variants of the PIR-RSSI problem can also be considered, e.g., settings in which the user wishes to retrieve more than one message (see, e.g.,~\cite{BU17,BU2018,HKGRS2018,SSM2018,LG2018,HKRS2019,HS2021}), or the user's side information consists of some coded combinations or uncoded fractions of a subset of messages (see, e.g.,~\cite{T2017,WBU2018,WBU2018No2,HKS2018,HKS2019,HKS2019Journal,KKHS12019,KKHS22019}), or the servers store coded versions of the messages (see, e.g.,~\cite{TER2017,TGKHHER2017}).
In this work, we consider the setting in which the user wants to retrieve only one message and the user's side information is a subset of messages. 
In addition, we focus on the single-server setting, and only briefly discuss the multi-server setting.  


As shown in~\cite{KGHERS2020}, the capacity of single-server PIR-PSI and PIR-SI with $K$ messages and $M$ side information messages is given by ${1/(K-M)}$ and ${1/\lceil K/(M+1) \rceil}$, respectively, where the capacity is defined as the maximum achievable download rate.  
These results were also extended to the multi-server setting of PIR-PSI and PIR-SI in~\cite{CWJ2020} and~\cite{LG2020CISS}, respectively.
By combining the converse proof techniques of~\cite{KGHERS2020} for single-server PIR-PSI and PIR-SI, it can be shown that the capacity of single-server PIR-RSSI is upper bounded by ${1/\lceil (K-M_1)/(M_2+1)\rceil}$.
Note that this converse bound is tight for the cases of ${M_1\geq 1, M_2=0}$ (PIR-PSI) and  ${M_1=0,M_2\geq 1}$ (PIR-SI).
Thus, a natural question that arises is whether this bound is tight for any ${M_1\geq 1,M_2\geq 1}$. 
A simple comparison of this bound and the capacity of single-server PIR-PSI and PIR-SI shows that if this bound was tight, then the $M_1$ RSI messages and the $M_2$ SSI messages could be leveraged to the full extent of their individual potential in reducing the download cost. 

In this work, we characterize the capacity of single-server PIR-RSSI for the cases of ${M_1=1,M_2\geq 1}$ and ${M_1\geq 1,M_2=1}$, and 
show that the converse bound ${1/\lceil (K-M_1)/(M_2+1)\rceil}$ is not tight in general.
In particular, we prove that the capacity of single-server PIR-RSSI is upper bounded by ${1/\min\{K-M_2-1,\lceil K/(M_2+1)\rceil\}}$ or ${1/\min\{K-M_1-1,\lceil K/2\rceil\}}$ when ${M_1=1}$ or ${M_2=1}$, respectively.
Our converse proofs are based on information-theoretic and combinatorial arguments that rely on a necessary condition for any PIR-RSSI scheme.  
In addition, we build up on the existing schemes for single-server PIR-PSI and PIR-SI, and prove the achievability of the rate ${1/\min\{K-M_1-M_2,\lceil K/(M_2+1)\rceil\}}$ for all ${M_1\geq 1,M_2\geq 1}$. 
Our results show that for sufficiently small values of $K$, i.e., any $K$ such that ${K-M_1-M_2<\lceil K/(M_2+1) \rceil}$, 
the $M_2$ SSI messages can help in reducing the download cost only if they are kept private; and 
for larger values of $K$, i.e., any $K$ such that ${K-M_1-M_2\geq\lceil K/(M_2+1) \rceil}$, the $M_1$ RSI messages cannot help in reducing the download cost.


\section{Problem Setup}\label{sec:SN}
Throughout, random variables and their realizations are denoted by bold-face symbols and regular symbols, respectively.
For any integer $i\geq 1$, we denote  ${\{1,\dots,i\}}$ by $[i]$. 
 
Let $\mathbbmss{F}_q$ be a finite field of order $q$, and let $\mathbbmss{F}_{q}^{n}$ be the $n$-dimensional vector space over $\mathbbmss{F}_q$. 
Let $K,M_1,M_2$ be arbitrary integers such that $M_1,M_2\geq 1$ and ${K> M_1+M_2}$. 

Consider a server that stores $K$ messages ${\mathrm{X}_1,\dots,\mathrm{X}_K}$, where $\mathrm{X}_k\in \mathbbmss{F}_q^{n}$ for $k\in [K]$. 
We denote by $\mathrm{X}_{\mathrm{T}}$ the set of messages $\{\mathrm{X}_k: k\in \mathrm{T}\}$ for every ${\mathrm{T}\subseteq [K]}$. 

Let $\mathcal{R}$ and $\mathcal{S}$ be the set of all $M_1$-subsets and all $M_2$-subsets of $[K]$, respectively.  
Consider a user who initially knows the $M_1+M_2$ messages $\mathrm{X}_{\mathrm{R}}\cup \mathrm{X}_{\mathrm{S}}$ for a given ${(\mathrm{R},\mathrm{S})\in \mathcal{R}\times \mathcal{S}}$ such that $\mathrm{R}\cap\mathrm{S} = \emptyset$, and wishes to retrieve the message $\mathrm{X}_{\mathrm{W}}$ for a given $\mathrm{W}\in [K]\setminus (\mathrm{R}\cup\mathrm{S})$.
We refer to $\mathrm{X}_{\mathrm{W}}$ as the \emph{demand message}, 
$\mathrm{X}_{\mathrm{R}}$ as the \emph{reusable side information (RSI) message(s)}, $\mathrm{X}_{\mathrm{S}}$ as the \emph{single-use side information (SSI) message(s)}, $\mathrm{W}$ as the \emph{demand index}, $\mathrm{R}$ as the \emph{index set of the RSI}, $\mathrm{S}$ as the \emph{index set of the SSI}, $M_1$ as the \emph{size of the RSI}, and $M_2$ as the \emph{size of the SSI}.

In this work, we assume that:  
\begin{enumerate}
\item $\mathbf{X}_1,\dots,\mathbf{X}_K$ are independent and uniformly distributed over $\mathbbmss{F}_{q}^{n}$. 
Thus, ${H(\mathbf{X}_{\mathrm{T}})= |\mathrm{T}| B}$ for all ${\mathrm{T}\subseteq [K]}$, where $B:= n\log_2 q$ is the entropy of a message. 
\item $(\mathbf{W},\mathbf{R},\mathbf{S})$ and $\mathbf{X}_{1},\dots,\mathbf{X}_K$ are independent. 
\item The distribution of ${(\mathbf{R},\mathbf{S})}$ is uniform over all ${(\mathrm{R},\mathrm{S})\in \mathbbm{R}\times\mathbbm{S}}$ such that ${\mathrm{R}\cap\mathrm{S} = \emptyset}$, and the conditional distribution of $\mathbf{W}$ given $(\mathbf{R},\mathbf{S})=(\mathrm{R},\mathrm{S})$ is uniform over all $\mathrm{W}\in [K]\setminus (\mathrm{R}\cup\mathrm{S})$. 
\item The size of the RSI ($M_1$), the size of the SSI ($M_2$), and the distribution of $(\mathbf{W},\mathbf{R},\mathbf{S})$ are initially known by the server, whereas the realization $(\mathrm{W},\mathrm{R},\mathrm{S})$ is initially unknown to the server.
\end{enumerate}

Given $(\mathrm{W},\mathrm{R},\mathrm{S})$, the user generates a query $\mathrm{Q}^{[\mathrm{W},\mathrm{R},\mathrm{S}]}$, simply denoted by $\mathrm{Q}$, and sends it to the server. 
The query $\mathrm{Q}$ is a deterministic or stochastic function of $(\mathrm{W},\mathrm{R},\mathrm{S})$, independent of $\mathrm{X}_1,\dots,\mathrm{X}_K$.
For the ease of notation, we denote $\mathbf{Q}^{[\mathbf{W},\mathbf{R},\mathbf{S}]}$ by $\mathbf{Q}$. 
The query $\mathrm{Q}$ must satisfy the following condition: 
${\mathbb{P}(\mathbf{W}=\mathrm{W}^{*}, \mathbf{R} = \mathrm{R}^{*}|\mathbf{Q}=\mathrm{Q})=\mathbb{P}(\mathbf{W}=\mathrm{W}^{*}, \mathbf{R} = \mathrm{R}^{*})}$ for all $(\mathrm{W}^{*},\mathrm{R}^{*})\in [K]\times \mathcal{R}$ such that $\mathrm{W}^{*}\not\in \mathrm{R}^{*}$. 
This condition is referred to as the \emph{privacy condition}. 
The privacy condition ensures 
(i) the privacy of the demand index, i.e., ${\mathbb{P}(\mathbf{W}=\mathrm{W}^{*}|\mathbf{Q}=\mathrm{Q})=\mathbb{P}(\mathbf{W}=\mathrm{W}^{*})}$ for all $\mathrm{W}^{*}\in [K]$, and 
(ii) the privacy of the RSI index set (and hence, the name ``reusable side information''), i.e., ${\mathbb{P}(\mathbf{R}=\mathrm{R}^{*}|\mathbf{Q}=\mathrm{Q})=\mathbb{P}(\mathbf{R}=\mathrm{R}^{*})}$ for all $\mathrm{R}^{*}\in \mathcal{R}$. 
Note that the privacy condition does not guarantee the privacy of the SSI index set (and hence, the name ``single-use side information''), i.e., ${\mathbb{P}(\mathbf{S} = \mathrm{S}^{*}|\mathbf{Q}=\mathrm{Q})}$ and ${\mathbb{P}(\mathbf{S} = \mathrm{S}^{*})}$ are not necessarily equal for all $\mathrm{S}^{*}\in \mathcal{S}$. 

Upon receiving $\mathrm{Q}$, the server generates an answer $\mathrm{A}^{[\mathrm{W},\mathrm{R},\mathrm{S}]}$, simply denoted by $\mathrm{A}$, and sends it back to the user. 
The answer $\mathrm{A}$ is a deterministic function of $\mathrm{Q}$ and $\mathrm{X}_1,\dots,\mathrm{X}_K$. 
For simplifying the notation, we denote $\mathbf{A}^{[\mathbf{W},\mathbf{R},\mathbf{S}]}$ by $\mathbf{A}$. 
Thus, ${H(\mathbf{A}|\mathbf{Q},\mathbf{X}_1,\dots,\mathbf{X}_K)=0}$. 
The user must be able to recover  $\mathrm{X}_{\mathrm{W}}$ given $\mathrm{A}$, $\mathrm{Q}$, $\mathrm{X}_{\mathrm{R}}$, $\mathrm{X}_{\mathrm{S}}$, and $(\mathrm{W}, \mathrm{R}, \mathrm{S})$. 
That is,
$H(\mathbf{X}_{\mathrm{W}}| \mathbf{A},\mathbf{Q},\mathbf{X}_{\mathrm{R}},\mathbf{X}_{\mathrm{S}})=0$. 
This condition is referred to as the \emph{recoverability condition}. 

The problem is to design a protocol for generating a query $\mathrm{Q}^{[\mathrm{W},\mathrm{R},\mathrm{S}]}$ and the corresponding answer $\mathrm{A}^{[\mathrm{W},\mathrm{R},\mathrm{S}]}$ for any given $(\mathrm{W},\mathrm{R},\mathrm{S})$
such that both the privacy and recoverability conditions are satisfied.
We refer to this problem as \emph{single-server Private Information Retrieval with Reusable and Single-use Side Information}, or \emph{PIR-RSSI} for short.

The \emph{rate} of a PIR-RSSI protocol is defined as the ratio of the amount of information required by the user (i.e., $H(\mathbf{X}_{\mathbf{W}})$) to the amount of information downloaded from the server (i.e., $H(\mathbf{A}^{[\mathbf{W},\mathbf{R},\mathbf{S}]})$), and the \emph{capacity} of PIR-RSSI is defined as the supremum of rates over all PIR-RSSI protocols.

Our goal in this work is to characterize the capacity of PIR-RSSI in terms of the parameters $K,M_1,M_2$. 

\section{Main Results}
In this section, we present our main results. 
Theorem~\ref{thm:Conv} provides an upper bound on the capacity for ${M_1=1, M_2\geq 1}$ and ${M_1\geq 1, M_2=1}$, and Theorem~\ref{thm:Conv} provides a lower bound on the capacity for any ${M_1\geq 1,M_2\geq 1}$. 
The proof of converse and achievability are presented in Sections~\ref{sec:Conv} and~\ref{sec:Ach}, respectively.    

\begin{theorem}\label{thm:Conv}
For PIR-RSSI with $K$ messages, RSI's size $M_1$, and SSI's size $M_2$, the capacity is upper bounded by ${1/\min\{K-M_2-1,\lceil K/(M_2+1)\rceil\}}$ or ${1/\min\{K-M_1-1,\lceil K/2\rceil\}}$ when ${M_1=1}$ or ${M_2=1}$, respectively.
\end{theorem}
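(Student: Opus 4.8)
The plan is to convert the claimed capacity upper bound into a lower bound on the download cost. Since the rate of any scheme equals $B/H(\mathbf{A})$ with $B=H(\mathbf{X}_{\mathbf{W}})$, it suffices to prove $H(\mathbf{A})\ge \min\{K-M_2-1,\lceil K/(M_2+1)\rceil\}\,B$ when $M_1=1$ (and the symmetric statement when $M_2=1$). Conditioning on a fixed query realization $\mathrm{Q}$ of positive probability and using that the messages are independent of $\mathbf{Q}$, I would begin from
\[
H(\mathbf{A})\ge H(\mathbf{A}\mid \mathbf{Q})\ge I(\mathbf{A};\mathbf{X}_{[K]}\mid \mathbf{Q})=KB-H(\mathbf{X}_{[K]}\mid \mathbf{A},\mathbf{Q}),
\]
so the problem reduces to upper bounding the residual uncertainty $H(\mathbf{X}_{[K]}\mid \mathbf{A},\mathbf{Q})$ by $\max\{M_1+M_2,\,K-\lceil K/(M_2+1)\rceil\}\,B$, since $K$ minus this maximum equals the desired minimum.

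The engine of the argument is a necessary condition extracted from the privacy and recoverability conditions. Because the privacy condition forces every pair $(\mathrm{W}^{*},\mathrm{R}^{*})$ with $\mathrm{W}^{*}\notin\mathrm{R}^{*}$ to be consistent with $\mathrm{Q}$ with the prior (positive) probability, some compatible SSI set $\mathrm{S}^{*}$ (disjoint from $\mathrm{R}^{*}$ and avoiding $\mathrm{W}^{*}$) must generate $\mathrm{Q}$; recoverability for that realization then gives $H(\mathbf{X}_{\mathrm{W}^{*}}\mid \mathbf{A},\mathbf{Q},\mathbf{X}_{\mathrm{R}^{*}},\mathbf{X}_{\mathrm{S}^{*}})=0$. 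For $M_1=1$ this reads: for every index $\mathrm{W}^{*}$ and every candidate RSI index $\rho\ne \mathrm{W}^{*}$, the answer together with $\mathrm{Q}$, $\mathbf{X}_{\rho}$, and some $M_2$ further messages determines $\mathbf{X}_{\mathrm{W}^{*}}$. I would record this as the single structural fact driving the converse.

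From here I would exhibit a ``base'' index set $\mathcal{B}$ with $H(\mathbf{X}_{[K]}\mid \mathbf{A},\mathbf{Q},\mathbf{X}_{\mathcal{B}})=0$, which gives $H(\mathbf{X}_{[K]}\mid \mathbf{A},\mathbf{Q})\le |\mathcal{B}|\,B$ and hence $H(\mathbf{A}\mid\mathbf{Q})\ge (K-|\mathcal{B}|)\,B$, via two peeling/chain constructions. The first mimics the PIR-SI converse: fixing a common RSI index and invoking its privacy lets that index act as a free, already-known helper, reducing the effective side information to the $M_2$ SSI messages and letting me build overlapping groups of size $M_2+1$, each contributing one decoded symbol, for a base set of size about $K-\lceil K/(M_2+1)\rceil$. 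The second exploits the extra freedom when $M_1=1$: since the RSI index may be chosen to be any already-determined index, I seed $\mathcal{B}$ with the $M_1+M_2$ indices of one genuine side-information configuration and then propagate, determining the remaining messages one at a time to obtain a base set of size $M_1+M_2$. Taking the smaller base set yields $|\mathcal{B}|=\max\{M_1+M_2,\,K-\lceil K/(M_2+1)\rceil\}$ and hence exactly $\min\{K-M_2-1,\lceil K/(M_2+1)\rceil\}$; the case $M_2=1$ is handled symmetrically, with a base set of size $M_1+1$ and a PIR-SI-type chain of length $\lceil K/2\rceil$.

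The main obstacle I anticipate is the propagation step in the second construction: guaranteeing that at each stage some not-yet-determined message admits a valid decoding relation whose RSI and SSI indices already lie in the current determined set. The necessary condition only asserts the \emph{existence} of some compatible SSI set for each demand/RSI pair, not one contained in a prescribed set, so controlling these sets as the determined set grows is delicate and is where a counting/averaging argument over the many consistent configurations must be invoked. This is precisely why the clean characterization is attainable only when one of $M_1,M_2$ equals $1$, so that either the private or the non-private coordinate is a single index that can be pinned down. A secondary, routine point is tracking the ceilings carefully and verifying that the two base-set sizes combine into the stated minimum across the small- and large-$K$ regimes.
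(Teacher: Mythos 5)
Your framework---lower-bounding $H(\mathbf{A})$ by exhibiting a base set $\mathcal{B}$ with $H(\mathbf{X}_{[K]\setminus\mathcal{B}}\mid\mathbf{A},\mathbf{Q},\mathbf{X}_{\mathcal{B}})=0$, and extracting from privacy plus recoverability the fact that for \emph{every} pair $(\mathrm{W}^{*},\mathrm{R}^{*})$ some $\mathrm{S}^{*}$ makes $\mathrm{X}_{\mathrm{W}^{*}}$ recoverable given $\mathrm{X}_{\mathrm{R}^{*}}\cup\mathrm{X}_{\mathrm{S}^{*}}$---is exactly the paper's setup (its Lemma~\ref{lem:NC}). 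But the two constructions you then sketch do not reach the stated bound. Your first chain, which fixes a common RSI index and treats it as a free helper before running a PIR-SI-style partition into groups of size $M_2+1$, only covers $[K]$ minus that index and therefore yields a base set of size $K-\lceil (K-M_1)/(M_2+1)\rceil$, i.e.\ the weaker bound $1/\lceil (K-M_1)/(M_2+1)\rceil$ already dismissed in Remark~\ref{rem:Comp}. The theorem is strictly stronger: for $K=5$, $M_1=M_2=1$ it claims download at least $3B$ while your chain gives only $2B$. Closing precisely this gap is the content of the paper's converse, so the construction you lean on for the large-$K$ regime cannot establish the result.

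Your second construction contains the real difficulty and leaves it unresolved: you acknowledge that Lemma~\ref{lem:NC} only guarantees \emph{some} compatible $\mathrm{S}^{*}$, not one inside the current determined set, and defer to an unspecified ``counting/averaging argument.'' The paper handles this with two ingredients, neither of which appears in your proposal. First, Lemma~\ref{lem:Grow} shows that propagation is benign \emph{once a good seed exists}: if $\alpha$ messages determine $\beta\geq\lceil\alpha/M_2\rceil$ others, each further application of Lemma~\ref{lem:NC} adds at most $M_2$ new ``given'' indices and at least one new ``recovered'' index, the invariant $\beta\geq\lceil\alpha/M_2\rceil$ is preserved, and the process terminates with $L\leq\lfloor KM_2/(M_2+1)\rfloor$. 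Second---and this is the bulk of the proof---a case analysis over a covering collection of tuples $(\mathrm{R}_i,\mathrm{S}_i,\mathrm{W}_i)$ with $\mathrm{R}_i=[M_1]$, split according to how the sets $\mathrm{S}_j$ intersect the earlier $\mathrm{S}_i\cup\mathrm{W}_i$, shows that \emph{every} scheme admits such a seed; this is where $M_1=1$ or $M_2=1$ is genuinely used. Finally, your closing logic is inverted: taking the \emph{smaller} of two base sets would give $|\mathcal{B}|=\min\{M_1+M_2,\,K-\lceil K/(M_2+1)\rceil\}$ and hence the bound $1/\max\{K-M_1-M_2,\lceil K/(M_2+1)\rceil\}$, which would contradict the achievability in Theorem~\ref{thm:Ach}; the correct statement is regime-dependent (a size-$(M_1+M_2)$ base set when $K-M_1-M_2<\lceil K/(M_2+1)\rceil$, a size-$\lfloor KM_2/(M_2+1)\rfloor$ base set otherwise), and only one of the two is guaranteed for a given $K$.
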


To prove this result, we use a mix of information-theoretic and combinatorial arguments. 
These arguments rely on a necessary condition imposed by the privacy and recoverability conditions. 
For the case of $M_1=1$ or $M_2=1$, we show that for any PIR-RSSI protocol there exists a subset of messages of size at most ${\max\{M_1+M_2, \lfloor K M_2/(M_2+1)\rfloor\}}$ given which all $K$ messages can be recovered from the query and the answer. 
This result yields a lower bound of ${(K-\max\{M_1+M_2, \lfloor KM_2/(M_2+1)\rfloor \})B}$, or subsequently, ${(\min\{K-M_1-M_2, \lceil K/(M_2+1)\rceil\})B}$, on the amount of information downloaded from the server, where $B$ is the amount of information in a message. 
(Note that ${K-\lfloor K M_2/(M_2+1)\rfloor\geq \lceil K/(M_2+1)\rceil}$.)
Thus, the rate of any PIR-RSSI protocol for ${M_1=1}$ or ${M_2=1}$ is upper bounded by ${1/\min\{K-M_1-M_2,\lceil K/(M_2+1)\rceil\}}$. 

\begin{theorem}\label{thm:Ach}
For PIR-RSSI with $K$ messages, RSI's size $M_1$, and SSI's size $M_2$, the capacity is lower bounded by ${1/\min\{K-M_1-M_2,\lceil K/(M_2+1)\rceil\}}$.
\end{theorem}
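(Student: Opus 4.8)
The plan is to prove the lower bound constructively, by exhibiting for every $M_1,M_2\geq 1$ \emph{two} PIR-RSSI protocols and then using whichever one downloads less. The first protocol will download $(K-M_1-M_2)B$ bits and the second $\lceil K/(M_2+1)\rceil B$ bits; since the rate equals $H(\mathbf{X}_{\mathbf{W}})/H(\mathbf{A})=B/H(\mathbf{A})$, together they certify a rate of at least $1/\min\{K-M_1-M_2,\lceil K/(M_2+1)\rceil\}$.

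The first protocol is a reduction to single-server PIR-PSI. I would feed the capacity-achieving PIR-PSI scheme of~\cite{KGHERS2020} the combined known set $\mathrm{R}\cup\mathrm{S}$ (of size $M_1+M_2$) as its private side information; this downloads $K-(M_1+M_2)$ coded symbols and recovers $\mathbf{X}_{\mathbf{W}}$ from $\mathbf{X}_{\mathrm{R}\cup\mathrm{S}}$, so recoverability is immediate. For privacy, note that by construction the query depends on $(\mathrm{W},\mathrm{R},\mathrm{S})$ only through $(\mathrm{W},\mathrm{R}\cup\mathrm{S})$ and the scheme's independent internal randomness, so $\mathbf{Q}$ is conditionally independent of the split $(\mathbf{R},\mathbf{S})$ given $(\mathbf{W},\mathbf{R}\cup\mathbf{S})$. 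Writing $\mathbb{P}(\mathbf{W}=\mathrm{W}^{*},\mathbf{R}=\mathrm{R}^{*}\mid \mathbf{Q}=\mathrm{Q})=\sum_{\mathrm{S}^{*}}\mathbb{P}(\mathbf{W}=\mathrm{W}^{*},\mathbf{R}=\mathrm{R}^{*},\mathbf{S}=\mathrm{S}^{*}\mid \mathbf{Q}=\mathrm{Q})$, this conditional independence together with the PIR-PSI guarantee $\mathbb{P}(\mathbf{W}=\mathrm{W}^{*},\mathbf{R}\cup\mathbf{S}=\mathrm{T}^{*}\mid \mathbf{Q})=\mathbb{P}(\mathbf{W}=\mathrm{W}^{*},\mathbf{R}\cup\mathbf{S}=\mathrm{T}^{*})$ makes each summand equal to its prior; marginalizing over $\mathrm{S}^{*}$ recovers exactly the PIR-RSSI privacy condition. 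Thus the \emph{stronger} PIR-PSI privacy (which protects all of $\mathrm{R}\cup\mathrm{S}$) yields the required $(\mathbf{W},\mathbf{R})$-privacy essentially for free.

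The second protocol adapts the capacity-achieving single-server PIR-SI partition-and-code scheme of~\cite{KGHERS2020}, using only the $M_2$ SSI messages to decode. I would draw a partition of $[K]$ into $\lceil K/(M_2+1)\rceil$ groups, uniformly at random subject to $\{\mathrm{W}\}\cup\mathrm{S}$ forming a single group of size $M_2+1$, take $\mathbf{Q}$ to be this partition, and download the $\mathbbmss{F}_q^{n}$-sum of the messages in each group. The user recovers $\mathbf{X}_{\mathbf{W}}$ from its own group using $\mathbf{X}_{\mathrm{S}}$, so recoverability holds, the download is $\lceil K/(M_2+1)\rceil$ symbols, and \emph{demand} privacy is inherited directly from the PIR-SI analysis.

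The main obstacle is that \emph{RSI} privacy is \emph{not} supplied by PIR-SI, which protects no side information, and so must be argued separately for this second protocol. The quantity to establish is that the randomized construction treats the $M_1$ RSI indices identically to every other non-demand, non-SSI index, so that conditioning on the realized partition leaves the joint law of $(\mathbf{W},\mathbf{R})$ equal to its prior. I would make this precise by a symmetry/counting argument: for a fixed partition, count the triples $(\mathrm{W}^{*},\mathrm{R}^{*},\mathrm{S}^{*})$ consistent with it and verify that the induced posterior on $(\mathrm{W}^{*},\mathrm{R}^{*})$ is uniform. The delicate case is $(M_2+1)\nmid K$, where the unavoidable smaller group could a priori correlate with the location of $\mathbf{R}$; resolving this—by fixing the group-size profile and placement distribution so that every admissible $(\mathrm{W}^{*},\mathrm{R}^{*})$ stays equally likely—is where the real work lies and mirrors the care already needed for non-divisible $K$ in PIR-SI. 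With both protocols in hand, selecting the one of smaller download cost gives the claimed rate $1/\min\{K-M_1-M_2,\lceil K/(M_2+1)\rceil\}$ for all $M_1,M_2\geq 1$.
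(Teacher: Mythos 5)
Your first protocol is essentially the paper's MDS-code scheme (the capacity-achieving PIR-PSI scheme run on the combined set $\mathrm{R}\cup\mathrm{S}$ of size $M_1+M_2$), and your privacy argument for it is sound. The gap is in your second protocol. By conditioning the random partition on $\{\mathrm{W}\}\cup\mathrm{S}$ forming a group, your construction guarantees that the demand's group never contains an RSI index and always has size exactly $M_2+1$. Both facts are visible in the \emph{support} of the posterior, not merely its shape, so no symmetry or counting argument over the remaining randomness can repair privacy. Concretely: fix a realized partition $\mathrm{Q}$ and a pair $(\mathrm{W}^{*},\mathrm{R}^{*})$ with $\mathrm{W}^{*}\notin\mathrm{R}^{*}$ but with $\mathrm{R}^{*}$ intersecting the group $G$ that contains $\mathrm{W}^{*}$. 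The only SSI set consistent with your construction and with $\mathbf{W}=\mathrm{W}^{*}$ is $\mathrm{S}^{*}=G\setminus\{\mathrm{W}^{*}\}$, which is not disjoint from $\mathrm{R}^{*}$; hence $\mathbb{P}(\mathbf{W}=\mathrm{W}^{*},\mathbf{R}=\mathrm{R}^{*}\mid\mathbf{Q}=\mathrm{Q})=0$ while the prior is positive. Likewise, when $(M_2+1)\nmid K$, any $\mathrm{W}^{*}$ lying in the short part gets posterior zero. You flag the non-divisible case as ``where the real work lies,'' but the violation is already present when $(M_2+1)\mid K$, and it is a defect of the construction rather than of the analysis.

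The paper's Partition-and-Code scheme avoids this by reversing the order of randomization: it first assigns the demand index $\mathrm{W}$ \emph{and all $M_1$ RSI indices} at random to the $P=\lceil K/(M_2+1)\rceil$ parts (so an RSI index may share the demand's part, and the demand may land in the short part), and only then pads the demand's part $i^{*}$ with $s_{i^{*}}=\alpha_{i^{*}}-r_{i^{*}}-1$ SSI indices, where $\alpha_{i^{*}}$ is that part's size and $r_{i^{*}}$ is the number of RSI indices that landed in it; the remaining indices are placed at random. Decoding then subtracts off both the RSI and the SSI messages that ended up in part $i^{*}$. With your second construction replaced by this one, the remainder of your argument---running whichever of the two protocols downloads less---goes through and yields the claimed rate $1/\min\{K-M_1-M_2,\lceil K/(M_2+1)\rceil\}$.
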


The proof is based on a simple modification of the existing schemes for single-server PIR-PSI and single-server PIR-SI. 
In particular, the rate ${1/(K-M_1-M_2)}$ is achievable by a modified version of the MDS Code scheme of~\cite{KGHERS2020}; 
and the rate ${1/\lceil K/(M_2+1) \rceil}$ is achievable by a modified version of the Partition-and-Code scheme of~\cite{KGHERS2020}. 

\begin{remark}\label{rem:Comp}
\emph{Using similar proof techniques as in~\cite{KGHERS2020}, it can be shown that the capacity of PIR-RSSI is upper bounded by ${1/\lceil (K-M_1)/(M_2+1)\rceil}$. 
This bound suggests that the $M_1$ RSI messages can be leveraged to reduce the effective number of messages from $K$ to ${K-M_1}$, and the $M_2$ SSI messages can be further leveraged to reduce the effective number of messages from ${K-M_1}$ to ${\lceil (K-M_1)/(M_2+1)\rceil}$.
The tightness of this bound for ${M_1=0,M_2\geq 1}$ and ${M_1\geq 1,M_2=0}$ follows from the results in~\cite{KGHERS2020} for single-server PIR-SI and single-server PIR-PSI, respectively. 
Interestingly, our results in this work disprove the tightness of this bound for ${M_1=1,M_2\geq 1}$ and ${M_1\geq 1, M_2=1}$. 
This shows that the RSI and the SSI cannot always be leveraged simultaneously so as to reduce the effective number of messages from $K$ to ${\lceil (K-M_1)/(M_2+1)\rceil}$.}
\end{remark}

\begin{remark}\label{rem:Conj}
\emph{We conjecture that the capacity of PIR-RSSI is given by ${1/\min\{K-M_1-M_2,\lceil K/(M_2+1)\rceil\}}$ for all $K,M_1,M_2$.
The correctness of this conjecture for ${M_1=0,M_2\geq 1}$ and ${M_1\geq 1, M_2=0}$ follows from the results of~\cite{KGHERS2020}; and  Theorems~\ref{thm:Conv} and~\ref{thm:Ach} prove this conjecture for ${M_1=1,M_2\geq 1}$ and ${M_1\geq 1,M_2=1}$. 
The proof of this conjecture remains open in general for ${M_1>1, M_2>1}$. 
In addition, we conjecture that the capacity of multi-server PIR-RSSI is given by ${(1-1/N^{K^{*}})/(1-1/N)}$, where $N$ is the number of servers each of which stores an identical copy of the $K$ messages, and ${K^{*}:= \min\{K-M_1-M_2,\lceil K/(M_2+1)\rceil\}}$. 
The achievability of this rate follows from a simple extension of the existing schemes for multi-server PIR-PSI~\cite{KGHERS2020,CWJ2020} and multi-server PIR-SI~\cite{KGHERS2020,LG2020CISS}. 
The proof of converse for ${M_1=0,M_2\geq 1}$ and ${M_1\geq 1, M_2=0}$ follows from the results of~\cite{CWJ2020} and~\cite{LG2020CISS}, and remains open for ${M_1\geq 1, M_2\geq 1}$.}
\end{remark}

\section{Proof of Theorem~\ref{thm:Conv}}\label{sec:Conv}
We present the proof for the following cases separately: 
(i) ${M_1=1,M_2=1}$; 
(ii) ${M_1=1,M_2>1}$; and 
(iii) ${M_1>1,M_2=1}$. 
In each case, 
for sufficiently small values of $K$, i.e., any $K$ such that ${K-M_1-M_2<\lceil K/(M_2+1)\rceil}$, we need to show that ${H(\mathbf{A})\geq (K-M_1-M_2)B}$, and for sufficiently large values of $K$, i.e., any $K$ such that ${K-M_1-M_2\geq \lceil K/(M_2+1)\rceil}$, we need to show that ${H(\mathbf{A})\geq \lceil K/(M_2+1)\rceil B}$, where $B$ is the entropy of a message.
For sufficiently small (or large) $K$, 
it suffices to show that there exist ${L\leq M_1+M_2}$ (or ${L\leq \lfloor KM_2/(M_2+1) \rfloor}$) messages $\mathrm{X}_{\mathrm{I}}$ for some $L$-subset $\mathrm{I}$ of $[K]$ given which all other $K-L$ messages $\mathrm{X}_{[K]\setminus \mathrm{I}}$ can be recovered from the query and the answer, i.e., $H(\mathbf{X}_{[K]\setminus \mathrm{I}}|\mathbf{A},\mathbf{Q},\mathbf{X}_{\mathrm{I}}) = 0$.
This is because $H(\mathbf{A})\geq H(\mathbf{A}|\mathbf{Q}) \geq H(\mathbf{A}|\mathbf{Q},\mathbf{X}_{\mathrm{I}})$, 
and 
$H(\mathbf{A}|\mathbf{Q},\mathbf{X}_{\mathrm{I}}) 
\stackrel{\tiny\text{(a)}}{=} H(\mathbf{A}|\mathbf{Q},\mathbf{X}_{\mathrm{I}}) + H(\mathbf{X}_{[K]\setminus \mathrm{I}}|\mathbf{A},\mathbf{Q},\mathbf{X}_{\mathrm{I}}) 
\stackrel{\tiny\text{(b)}}{=} H(\mathbf{X}_{[K]\setminus \mathrm{I}}|\mathbf{Q},\mathbf{X}_{\mathrm{I}}) + H(\mathbf{A}|\mathbf{Q},\mathbf{X}_{\mathrm{I}},\mathbf{X}_{[K]\setminus \mathrm{I}}) 
\stackrel{\tiny\text{(c)}}{=} H(\mathbf{X}_{[K]\setminus \mathrm{I}}|\mathbf{Q},\mathbf{X}_{\mathrm{I}}) 
\stackrel{\tiny\text{(d)}}{=} H(\mathbf{X}_{[K]\setminus \mathrm{I}}) 
= (K-L)B$, where 
(a) holds because ${H(\mathbf{X}_{[K]\setminus \mathrm{I}}|\mathbf{A},\mathbf{Q},\mathbf{X}_{\mathrm{I}}) = 0}$ by assumption; 
(b) follows from the chain rule of entropy; 
(c) holds because $H(\mathbf{A}|\mathbf{Q},\mathbf{X}_{\mathrm{I}},\mathbf{X}_{[K]\setminus \mathrm{I}}) = 0$; and 
(d) follows from the independence of $\mathbf{Q}$ and $\mathbf{X}_1,\dots,\mathbf{X}_K$. 
Thus, $H(\mathbf{A})\geq (K-L)B\geq (K-M_1-M_2)B$ for ${L\leq M_1+M_2}$, and $H(\mathbf{A})\geq (K-L)B\geq (K-\lfloor KM_2/(M_2+1) \rfloor)B \geq \lceil K/(M_2+1)\rceil B$ for ${L\leq \lfloor KM_2/(M_2+1) \rfloor}$.

The following two lemmas are useful in the proofs. 

\begin{lemma}\label{lem:NC}
Given any PIR-RS-SI protocol, for any ${\mathrm{W}^{*}\in [K]}$ and ${\mathrm{R}^{*}\in \mathcal{R}}$ such that ${\mathrm{W}^{*}\not\in \mathrm{R}^{*}}$, there must exist ${\mathrm{S}^{*}\in \mathcal{S}}$ such that ${\mathrm{W}^{*}\not\in \mathrm{S}^{*}}$ and ${\mathrm{R}^{*}\cap \mathrm{S}^{*} = \emptyset}$, such that $\mathrm{X}_{\mathrm{W}^{*}}$ can be recovered from the query and the answer given ${\mathrm{X}_{\mathrm{R}^{*}}\cup \mathrm{X}_{\mathrm{S}^{*}}}$, 
i.e., ${H(\mathbf{X}_{\mathrm{W}^{*}}|\mathbf{A},\mathbf{Q},\mathbf{X}_{\mathrm{R}^{*}},\mathbf{X}_{\mathrm{S}^{*}})=0}$. 
\end{lemma}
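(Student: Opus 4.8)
The plan is to play the two defining conditions of a valid protocol against each other: privacy forces every demand--RSI pair $(\mathrm{W}^*,\mathrm{R}^*)$ to remain consistent with every realizable query, while recoverability guarantees decodability for any triple that is consistent with a given query. I would fix a query realization $\mathrm{Q}$ with $\mathbb{P}(\mathbf{Q}=\mathrm{Q})>0$ and establish $H(\mathbf{X}_{\mathrm{W}^*}\mid\mathbf{A},\mathbf{Q}=\mathrm{Q},\mathbf{X}_{\mathrm{R}^*},\mathbf{X}_{\mathrm{S}^*})=0$, extracting the SSI index set $\mathrm{S}^*$ from this query.

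First I would check that a compatible SSI set exists at all: the indices to be avoided are those in $\mathrm{R}^*\cup\{\mathrm{W}^*\}$, a set of size $M_1+1$, so an $M_2$-subset $\mathrm{S}^*$ that is disjoint from $\mathrm{R}^*$ and avoids $\mathrm{W}^*$ exists exactly because $K>M_1+M_2$. Hence the prior $\mathbb{P}(\mathbf{W}=\mathrm{W}^*,\mathbf{R}=\mathrm{R}^*)$ is strictly positive, since the triple distribution is uniform over the valid $(\mathrm{W},\mathrm{R},\mathrm{S})$ and at least one valid triple carries this demand--RSI pair. The privacy condition then yields $\mathbb{P}(\mathbf{W}=\mathrm{W}^*,\mathbf{R}=\mathrm{R}^*\mid\mathbf{Q}=\mathrm{Q})=\mathbb{P}(\mathbf{W}=\mathrm{W}^*,\mathbf{R}=\mathrm{R}^*)>0$. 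Writing this conditional probability as the sum over $\mathrm{S}$ of the posteriors $\mathbb{P}(\mathbf{W}=\mathrm{W}^*,\mathbf{R}=\mathrm{R}^*,\mathbf{S}=\mathrm{S}\mid\mathbf{Q}=\mathrm{Q})$, some term must be positive; I take the corresponding set as $\mathrm{S}^*$. Since the posterior is supported on valid triples, $\mathrm{S}^*$ is automatically an $M_2$-subset with $\mathrm{W}^*\notin\mathrm{S}^*$ and $\mathrm{R}^*\cap\mathrm{S}^*=\emptyset$, and positivity means $\mathrm{Q}$ is a query the protocol can emit for the triple $(\mathrm{W}^*,\mathrm{R}^*,\mathrm{S}^*)$.

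It remains to apply recoverability to this triple. Its decoder is a fixed map that, on every query it can emit and for every message tuple, reconstructs $\mathrm{X}_{\mathrm{W}^*}$ from $(\mathbf{A},\mathbf{Q},\mathbf{X}_{\mathrm{R}^*},\mathbf{X}_{\mathrm{S}^*})$, and the answer $\mathbf{A}$ is the same deterministic function of $(\mathbf{Q},\mathbf{X}_1,\dots,\mathbf{X}_K)$ no matter which triple produced the query. Because $\mathrm{Q}$ lies in the query support of $(\mathrm{W}^*,\mathrm{R}^*,\mathrm{S}^*)$ and the messages are independent of $(\mathbf{W},\mathbf{R},\mathbf{S})$, this decoder recovers $\mathrm{X}_{\mathrm{W}^*}$ for all message realizations once $\mathbf{Q}=\mathrm{Q}$ is fixed, which is precisely the desired conditional entropy identity.

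The main obstacle is conceptual rather than computational, and it lives in the last two steps. The SSI set $\mathrm{S}^*$ furnished by privacy can depend on the query realization $\mathrm{Q}$, whereas the statement reads as a single $\mathrm{S}^*$ with the unconditional $H(\mathbf{X}_{\mathrm{W}^*}\mid\mathbf{A},\mathbf{Q},\mathbf{X}_{\mathrm{R}^*},\mathbf{X}_{\mathrm{S}^*})=0$; I would therefore read the conclusion per query realization, which is all the converse actually consumes, since the generating-set construction that follows is performed query-by-query and only the uniform size bound on $\mathrm{I}$ is used when lower-bounding $H(\mathbf{A}\mid\mathbf{Q})$. The point requiring the most care is \emph{why} a decoder tied to the triple $(\mathrm{W}^*,\mathrm{R}^*,\mathrm{S}^*)$ keeps succeeding on message tuples that, together with $\mathrm{Q}$, need not have been generated by that triple; this is exactly where the independence of $\mathbf{X}_1,\dots,\mathbf{X}_K$ from $(\mathbf{W},\mathbf{R},\mathbf{S})$ and the determinism of $\mathbf{A}$ given $(\mathbf{Q},\mathbf{X}_1,\dots,\mathbf{X}_K)$ must be used explicitly.
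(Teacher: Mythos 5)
Your argument is correct and is precisely the standard privacy-forces-positive-posterior / recoverability argument that the paper alludes to but omits (the paper states only that the proof is ``straightforward by way of contradiction''); your direct version is the contrapositive of that contradiction. Your observation that $\mathrm{S}^{*}$ may depend on the query realization, and that the conclusion should be read per realization of $\mathrm{Q}$, is also exactly how the paper's converse consumes the lemma, since the tuples $(\mathrm{R}_i,\mathrm{S}_i,\mathrm{W}_i)$ there are defined for a fixed query and answer.
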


\begin{proof}
The proof is straightforward by the way of contradiction---relying on the privacy and recoverability conditions, and hence omitted for brevity. 
\end{proof}

\begin{lemma}\label{lem:Grow}
Given any PIR-RS-SI protocol, if there exist ${\alpha\geq \lceil M_1 M_2/(M_2+1)\rceil}$ messages given which ${\beta\geq \lceil\alpha/M_2\rceil}$ other messages can be recovered from the query and the answer, there exist ${L\leq \lfloor K M_2/(M_2+1)\rfloor}$ messages given which all other ${K-L}$ messages can be recovered from the query and the answer. 
\end{lemma}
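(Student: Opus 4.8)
The plan is to prove the statement by an iterative ``closure-growing'' argument driven by Lemma~\ref{lem:NC}. I maintain two nested sets $\mathrm{I}\subseteq\mathrm{C}\subseteq[K]$, where $\mathrm{I}$ is a set of messages I agree to treat as given and $\mathrm{C}$ is the set of messages recoverable from $\mathbf{X}_{\mathrm{I}}$ together with the query and answer, so that the invariant $H(\mathbf{X}_{\mathrm{C}\setminus\mathrm{I}}\mid\mathbf{A},\mathbf{Q},\mathbf{X}_{\mathrm{I}})=0$ holds throughout. I initialize $\mathrm{I}$ to be the $\alpha$ messages in the hypothesis and $\mathrm{C}$ to be those $\alpha$ messages together with the $\beta$ recoverable ones, so that $|\mathrm{I}|=\alpha$ and $|\mathrm{C}|=\alpha+\beta$.

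For the growth step, as long as $\mathrm{C}\neq[K]$ I pick any $\mathrm{W}^{*}\notin\mathrm{C}$ and any $\mathrm{R}^{*}\subseteq\mathrm{C}$ with $|\mathrm{R}^{*}|=M_1$ (possible once $|\mathrm{C}|\geq M_1$; since $\mathrm{W}^{*}\notin\mathrm{C}$ we automatically get $\mathrm{W}^{*}\notin\mathrm{R}^{*}$). Lemma~\ref{lem:NC} then supplies an $M_2$-set $\mathrm{S}^{*}$, disjoint from $\mathrm{R}^{*}$ and avoiding $\mathrm{W}^{*}$, with $H(\mathbf{X}_{\mathrm{W}^{*}}\mid\mathbf{A},\mathbf{Q},\mathbf{X}_{\mathrm{R}^{*}},\mathbf{X}_{\mathrm{S}^{*}})=0$. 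Since I cannot steer $\mathrm{S}^{*}$, I absorb its genuinely new part $\mathrm{S}^{*}\setminus\mathrm{C}$ into $\mathrm{I}$ and add $\mathrm{W}^{*}$ to the closure: $\mathrm{I}\leftarrow\mathrm{I}\cup(\mathrm{S}^{*}\setminus\mathrm{C})$ and $\mathrm{C}\leftarrow\mathrm{C}\cup(\mathrm{S}^{*}\setminus\mathrm{C})\cup\{\mathrm{W}^{*}\}$. A short chain of conditional-entropy steps---recover $\mathbf{X}_{\mathrm{C}}$, and hence $\mathbf{X}_{\mathrm{R}^{*}}$ and $\mathbf{X}_{\mathrm{S}^{*}\cap\mathrm{C}}$, from the old invariant, read off $\mathbf{X}_{\mathrm{S}^{*}\setminus\mathrm{C}}$ from the enlarged $\mathrm{I}$, then invoke Lemma~\ref{lem:NC}---shows the invariant survives. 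Each step raises $|\mathrm{I}|$ by $|\mathrm{S}^{*}\setminus\mathrm{C}|\leq M_2$ and $|\mathrm{C}|$ by exactly $|\mathrm{S}^{*}\setminus\mathrm{C}|+1$, so $|\mathrm{C}|$ strictly increases and the process terminates with $\mathrm{C}=[K]$.

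Two bookkeeping facts then close the argument, and this is where both hypotheses are consumed. First, feasibility: I need $|\mathrm{C}|\geq M_1$ at every step, and since $\mathrm{C}$ only grows it suffices to check it initially; from $\beta\geq\lceil\alpha/M_2\rceil\geq\alpha/M_2$ and $\alpha\geq\lceil M_1 M_2/(M_2+1)\rceil$ I get $|\mathrm{C}|=\alpha+\beta\geq\alpha(M_2+1)/M_2\geq M_1$, which uses the first hypothesis. Second, the budget: letting $T$ be the number of steps and $P$ the total number of messages charged to $\mathrm{I}$, the identity $K-(\alpha+\beta)=P+T$ together with $P\leq M_2T$ gives $T\geq(K-\alpha-\beta)/(M_2+1)$, whence the final size $L=\alpha+P=\alpha+(K-\alpha-\beta)-T\leq\alpha+(K-\alpha-\beta)\tfrac{M_2}{M_2+1}$. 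A one-line manipulation shows this last quantity is at most $KM_2/(M_2+1)$ precisely when $\beta\geq\alpha/M_2$, which is the second hypothesis; as $L$ is an integer, $L\leq\lfloor KM_2/(M_2+1)\rfloor$, as required.

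The main obstacle is exactly that Lemma~\ref{lem:NC} returns an $\mathrm{S}^{*}$ of the privacy-forced size $M_2$ that I have no control over and cannot force inside the current closure; the entire proof hinges on charging only $\mathrm{S}^{*}\setminus\mathrm{C}$ to $\mathrm{I}$ while gaining the ``free'' message $\mathrm{W}^{*}$, so that every step costs at most a growth ratio of $M_2$ to $M_2+1$. The seed conditions are tuned so that the initial ratio $|\mathrm{I}|/|\mathrm{C}|$ already sits at or below $M_2/(M_2+1)$ and so that an $\mathrm{R}^{*}$ of size $M_1$ can always be chosen. In writing this up I would take care that the worst case $|\mathrm{S}^{*}\setminus\mathrm{C}|=M_2$ at every step is what makes the bound tight, and that no off-by-one slips through the floor and ceiling functions.
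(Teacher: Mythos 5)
Your proposal is correct and follows essentially the same route as the paper's proof: iteratively pick $\mathrm{W}^{*}$ outside the current recoverable set and an $M_1$-subset $\mathrm{R}^{*}$ inside it, invoke Lemma~\ref{lem:NC}, charge only $\mathrm{S}^{*}\setminus\mathrm{C}$ (of size at most $M_2$) to the ``given'' set while gaining $\mathrm{W}^{*}$ for free, with the same seed checks that $\alpha+\beta\geq M_1$. The only cosmetic difference is bookkeeping: the paper maintains the per-step invariant $\beta^{*}\geq\lceil\alpha^{*}/M_2\rceil$ and solves $K-L\geq\lceil L/M_2\rceil$ at termination, whereas you sum the per-step costs globally at the end --- the two computations are equivalent.
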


\begin{proof}
Suppose that given the messages  ${\mathrm{X}_{\mathrm{I}}}$ for some $\alpha$-subset $\mathrm{I}$ of $[K]$ the messages ${\mathrm{X}_{\mathrm{J}}}$ for some $\beta$-subset $\mathrm{J}$ of $[K]\setminus \mathrm{I}$ can be recovered. 
Note that ${|\mathrm{I}\cup\mathrm{J}| = \alpha+\beta\geq M_1}$ since ${\alpha\geq M_1M_2/(M_2+1)}$ and ${\beta\geq \alpha/M_2\geq M_1/(M_2+1)}$ by assumption. 
Fix an arbitrary ${\mathrm{W}^{*}\in [K]\setminus (\mathrm{I}\cup \mathrm{J})}$ and an arbitrary $M_1$-subset $\mathrm{R}^{*}$ of $\mathrm{I}\cup\mathrm{J}$. 
By Lemma~\ref{lem:NC}, there exists an $M_2$-subset $\mathrm{S}^{*}$ of ${[K]\setminus (\{\mathrm{W}^{*}\}\cup \mathrm{R}^{*})}$ such that $\mathrm{X}_{\mathrm{W}^{*}}$ can be recovered given $\mathrm{X}_{\mathrm{R}^{*}}\cup \mathrm{X}_{\mathrm{S}^{*}}$. 
Let ${\mathrm{T}:=\mathrm{S}^{*}\setminus (\mathrm{I}\cup \mathrm{J})}$. 
(Note that ${|\mathrm{T}|\leq |\mathrm{S}^{*}| = M_2}$.)
Observe that the messages ${\mathrm{X}_{\mathrm{J}} \cup \mathrm{X}_{\mathrm{W}^{*}}}$ can be recovered given the messages ${\mathrm{X}_{\mathrm{I}}\cup\mathrm{X}_{\mathrm{T}}}$. 
Note that ${|\mathrm{J}\cup\{\mathrm{W}^{*}\}| = \beta+1}$ and ${|\mathrm{I}\cup \mathrm{T}| = \alpha+|\mathrm{T}|}$. 
Thus, there exist $\alpha^{*}:=\alpha+|\mathrm{T}|$ messages given which $\beta^{*}:=\beta+1$ other messages can be recovered. 
Note that ${\beta^{*}\geq \lceil\alpha^{*}/M_2\rceil}$. 
This is because $\beta^{*}=\beta+1\geq {\lceil\alpha/M_2\rceil+1}\geq {\lceil\alpha/M_2\rceil+\lceil|\mathrm{T}|/M_2\rceil} = \lceil\alpha/M_2+|\mathrm{T}|/M_2\rceil = \lceil\alpha^{*}/M_2\rceil$. 
Thus, there exist $\alpha^{*}$ messages given which ${\beta^{*}\geq \lceil \alpha^{*}/M_2\rceil}$ other messages can be recovered. 
By repeating this argument, it can be shown that there exist $L$ messages given which all other $K-L$ ($\geq \lceil L/M_2\rceil$) messages can be recovered from the query and the answer.
Solving the inequality ${K-L\geq \lceil L/M_2\rceil}$ for $L$, it follows that ${L\leq \lfloor KM_2/(M_2+1)\rfloor}$, as was to be shown. 
\end{proof}

Consider an arbitrary PIR-RSSI protocol. 
Let $\mathrm{Q}$ and $\mathrm{A}$ be a query and its corresponding answer generated by the protocol, respectively. 
For any ${\mathrm{W}^{*}\in [K]}$, any ${\mathrm{R}^{*}\in \mathcal{R}}$ such that ${\mathrm{W}^{*}\not\in \mathrm{R}^{*}}$, and any ${\mathrm{S}^{*}\in \mathcal{S}}$ such that ${\mathrm{W}^{*}\not\in \mathrm{S}^{*}}$ and ${\mathrm{R}^{*}\cap \mathrm{S}^{*} = \emptyset}$, we define a tuple $(\mathrm{R}^{*},\mathrm{S}^{*},\mathrm{W}^{*})$ if $\mathrm{X}_{\mathrm{W}^{*}}$ can be recovered from $\mathrm{Q}$ and $\mathrm{A}$ given ${\mathrm{X}_{\mathrm{R}^{*}}\cup \mathrm{X}_{\mathrm{S}^{*}}}$. 
Note that the result of Lemma~\ref{lem:NC} implies that for any $\mathrm{W}^{*}$ and any $\mathrm{R}^{*}$, there exists a tuple $(\mathrm{R}^{*},\mathrm{S}^{*},\mathrm{W}^{*})$ for some $\mathrm{S}^{*}$. 

Throughout the proof, we focus mainly on an arbitrary (but fixed) collection of tuples, $(\mathrm{R}_1,\mathrm{S}_1,\mathrm{W}_1),\dots, (\mathrm{R}_T,\mathrm{S}_T,\mathrm{W}_T)$, for some integer ${T\geq 1}$ (depending on $\mathrm{Q}$ and $\mathrm{A}$), satisfying the following conditions: 
(i) ${\mathrm{R}_i = [M_1]}$ for all ${i\in [T]}$; 
(ii) ${\mathrm{W}_i\not\in [M_1]}$ for all ${i\in [T]}$; 
(iii) ${\mathrm{W}_i\not\in \cup_{j=1}^{i-1} (\mathrm{S}_j \cup \mathrm{W}_j)}$ for all ${i\in [T]}$; and 
(iv) ${\cup_{j=1}^{T} (\mathrm{S}_j\cup \mathrm{W}_j)= [K]\setminus [M_1]}$.
Note that Lemma~\ref{lem:NC} guarantees the existence of such a collection of tuples. 
For simplifying the notation, for every $\mathrm{T}\subseteq [T]$ we denote $\cup_{i\in\mathrm{T}}\mathrm{S}_i$ and $\cup_{i\in\mathrm{T}}\mathrm{W}_i$ by $\mathrm{S}_{\mathrm{T}}$ and $\mathrm{W}_{\mathrm{T}}$, respectively.  

We use the notation $\mathrm{I}\rightarrow\mathrm{J}$ to represent that the messages $\mathrm{X}_{\mathrm{J}}$ can be recovered from $\mathrm{Q}$ and $\mathrm{A}$ given the messages $\mathrm{X}_{\mathrm{I}}$.
With an abuse of notation, we denote the set $\{\mathrm{W}\}$ by $\mathrm{W}$.

\begin{theorem}\label{thm:11}
For PIR-RSSI with $K$ messages, RSI's size ${M_1=1}$, and SSI's size ${M_2=1}$, the capacity is upper bounded by ${1/\min\{K-2,\lceil K/2 \rceil\}}$. 
\end{theorem}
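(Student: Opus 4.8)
The plan is to follow the reduction set up at the beginning of this section: since ${M_1=M_2=1}$, it suffices to exhibit, for the fixed realized query $\mathrm{Q}$ and answer $\mathrm{A}$, a subset ${\mathrm{I}\subseteq[K]}$ with ${\mathrm{I}\to[K]\setminus\mathrm{I}}$ and ${|\mathrm{I}|\le M_1+M_2=2}$ in the small-$K$ regime, or ${|\mathrm{I}|\le\lfloor K/2\rfloor}$ in the large-$K$ regime. For ${M_1=M_2=1}$ the small-$K$ regime (${K-2<\lceil K/2\rceil}$) reduces to the single value ${K=3}$, where Lemma~\ref{lem:NC} applied with ${\mathrm{R}^*=\{1\}}$ and ${\mathrm{W}^*=3}$ already yields a seed ${\{1,s\}\to\{3\}}$ of size two, so ${H(\mathbf{A})\ge(K-2)B=B}$ and the claimed bound holds trivially. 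The substance is the large-$K$ regime, in which I aim to produce a configuration to which Lemma~\ref{lem:Grow} applies: for ${M_1=M_2=1}$ that lemma only requires ${\alpha\ge1}$ messages from which ${\beta\ge\alpha}$ other messages can be recovered, and it then delivers a seed of size ${\le\lfloor K/2\rfloor}$, hence ${H(\mathbf{A})\ge\lceil K/2\rceil B}$.

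To build such a configuration, I would apply Lemma~\ref{lem:NC} with ${\mathrm{R}^*=[M_1]=\{1\}}$ to every ${\mathrm{W}^*=v\in[K]\setminus\{1\}}$, obtaining for each $v$ a single index $s(v)$ with ${\{1,s(v)\}\to\{v\}}$. Reading each relation as a directed edge ${s(v)\to v}$ on ${[K]\setminus\{1\}}$ gives a digraph in which every vertex has in-degree at least one; selecting one in-edge per vertex yields a functional digraph whose weakly connected components each consist of a unique directed cycle (of length ${\ge2}$, since ${s(v)\ne v}$) together with trees directed away from it. Seeding message $1$ together with any single vertex of such a cycle lets one recover its entire component by chasing edges. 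Consequently, if some component has at least three vertices, then seeding $\{1,c\}$ for a cycle vertex $c$ recovers at least two further messages, i.e. ${\alpha=2}$ and ${\beta\ge2}$, so Lemma~\ref{lem:Grow} finishes the proof. Since ${|[K]\setminus\{1\}|=K-1}$, when $K$ is even a partition into $2$-cycles is impossible, a component of size ${\ge3}$ must exist, and this case is immediate.

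The delicate case is $K$ odd with every component a $2$-cycle, i.e. $s$ a fixed-point-free involution pairing ${[K]\setminus\{1\}}$ into pairs $\{a_i,b_i\}$ with ${\{1,a_i\}\to\{b_i\}}$ and ${\{1,b_i\}\to\{a_i\}}$; here the ${\mathrm{R}^*=\{1\}}$ relations alone admit no valid pair $(\alpha,\beta)$, so I would invoke Lemma~\ref{lem:NC} for a second RSI index. Applying it with ${\mathrm{R}^*=\{a_1\}}$ and ${\mathrm{W}^*=1}$ produces an index ${z\ne 1,a_1}$ with ${\{a_1,z\}\to\{1\}}$. If $z$ lies in a pair different from $\{a_1,b_1\}$, I take the seed consisting of one representative from each of the $\lfloor K/2\rfloor$ pairs, chosen to contain both $a_1$ and $z$; then ${\{a_1,z\}\to\{1\}}$ recovers $\mathbf{X}_1$ and the pair relations recover the partner of every representative, so this seed of size $\lfloor K/2\rfloor$ recovers all remaining messages and ${H(\mathbf{A})\ge\lceil K/2\rceil B}$ follows directly.

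The main obstacle is excluding the fully degenerate situation in which, from every RSI index, message $1$ is recoverable only together with that index's own partner (the ``triple'' structure in which any two of $\mathbf{X}_1,\mathbf{X}_{a_i},\mathbf{X}_{b_i}$ determine the third). I expect this to be the crux. I would rule it out by applying Lemma~\ref{lem:NC} to a cross-pair demand, e.g. ${\mathrm{R}^*=\{a_1\}}$, ${\mathrm{W}^*=a_2}$, and arguing—using the independence and uniformity of the messages exactly as in the reduction at the start of this section—that the relations forced by the all-$2$-cycle structure leave $\mathbf{X}_{a_2}$ independent of $\mathbf{A},\mathbf{Q},\mathbf{X}_{a_1}$ and of any single further message, so no index $c$ can satisfy ${\{a_1,c\}\to\{a_2\}}$. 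This contradiction shows that the extra recoverability demanded by Lemma~\ref{lem:NC} cannot coexist with the all-$2$-cycle structure; equivalently, it forces either a length-two recovery chain under ${\mathrm{R}^*=\{1\}}$ (a component of size ${\ge3}$) or a relation into message $1$ from two indices in distinct pairs, placing us in one of the solved cases and completing the converse.
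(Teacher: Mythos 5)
Your overall architecture matches the paper's: the trivial $K=3$ case, the reduction to exhibiting $\alpha\ge 1$ messages from which $\beta\ge\alpha$ others are recoverable so that Lemma~\ref{lem:Grow} applies, and the organization of the Lemma~\ref{lem:NC} relations with $\mathrm{R}^{*}=\{1\}$ into a graph on $[K]\setminus\{1\}$ whose only problematic configuration is a perfect pairing $\{a_i,b_i\}$ with $\{1,a_i\}\rightarrow \{b_i\}$ and $\{1,b_i\}\rightarrow \{a_i\}$ (this is the paper's case~(ii)). Your handling of $K=3$, of components of size at least three, of even $K$, and of the sub-case where $\{a_1,z\}\rightarrow\{1\}$ with $z$ outside the pair $\{a_1,b_1\}$ is correct.

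The gap sits exactly at what you call the crux. To dispose of the fully degenerate configuration you propose to show that no index $c$ can satisfy $\{a_1,c\}\rightarrow\{a_2\}$, by arguing that ``the relations forced by the all-$2$-cycle structure leave $\mathbf{X}_{a_2}$ independent of $\mathbf{A},\mathbf{Q},\mathbf{X}_{a_1}$ and of any single further message.'' That inference is invalid: the recovery relations you have accumulated are only \emph{necessary} consequences of the privacy and recoverability conditions, not a complete description of the answer. The answer may carry additional information (in a linear scheme, say, a combination supported on $\{a_1,c,a_2\}$) on top of whatever the $2$-cycles force, so independence of $\mathbf{X}_{a_2}$ from $(\mathbf{A},\mathbf{Q},\mathbf{X}_{a_1},\mathbf{X}_c)$ cannot be deduced. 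Indeed, Lemma~\ref{lem:NC} \emph{guarantees} that such a $c$ exists, so the contradiction you are after cannot materialize; the productive move is the opposite one, namely to exploit the guaranteed cross-pair relation constructively. This is what the paper does: it applies Lemma~\ref{lem:NC} with RSI equal to $\mathrm{W}_l$ and demand equal to $\mathrm{S}_k$ for $k\neq l$. If the resulting side-information index is ever different from the partner $\mathrm{W}_k$, it lies inside $\{1\}\cup\mathrm{S}_{[T]\setminus\{k\}}\cup\mathrm{W}_{[T]\setminus\{k\}}$ and one obtains $\{1\}\cup\mathrm{S}_{[T]\setminus\{k\}}\rightarrow\mathrm{W}_{[T]\setminus\{k\}}\cup\mathrm{S}_k$ with $\alpha=\beta=T$; if it always equals the partner, the two symmetric relations $\mathrm{W}_l\cup\mathrm{W}_k\rightarrow\mathrm{S}_k$ and $\mathrm{W}_k\cup\mathrm{W}_l\rightarrow\mathrm{S}_l$ combine into $\mathrm{W}_k\cup\mathrm{W}_l\rightarrow\mathrm{S}_k\cup\mathrm{S}_l$, i.e.\ $\alpha=\beta=2$. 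A case analysis on where your $c$ lands (equal to $1$, to $b_1$, to $b_2$, or inside a third pair) would repair your argument along the same lines, but as written the key step is unsupported.
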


\begin{proof}
In this case, $K> M_1+M_2=2$.
Note that for ${K=3}$ (or $K\geq 4$), ${\min\{K-2,\lceil K/2 \rceil\}}$ is given by $K-2$ (or ${\lceil K/2 \rceil}$).
Thus, for $K=3$ (or $K\geq 4$), we need to show that there exist $L\leq 2$ (or $L\leq \lfloor K/2 \rfloor$) messages given which all other $K-L$ messages can be recovered.  

\emph{$K=3$:} 
Taking $\mathrm{R}_1=\{1\}$ and $\mathrm{W}_1=\{3\}$, it is obvious that $\mathrm{S}_1 = \{2\}$, i.e., the tuple $(\{1\},\{2\},\{3\})$ exists. 
Thus, $\{1,2\}\rightarrow \{3\}$, i.e., given the $2$ messages $\mathrm{X}_1,\mathrm{X}_2$ the other message $\mathrm{X}_3$ can be recovered. 


\emph{$K\geq 4$:} 
By Lemma~\ref{lem:Grow}, it suffices to show that there exist ${\alpha\geq \lceil M_1M_2/(M_2+1)\rceil =1}$ messages given which ${\beta\geq \lceil \alpha/M_2\rceil=\alpha}$ other messages can be recovered. 

Let ${\{(\mathrm{R}_i,\mathrm{S}_i,\mathrm{W}_i)\}_{i\in [T]}}$ be a collection of tuples as defined earlier. 
Note that ${\mathrm{R}_i = \{1\}}$ for all ${i\in [T]}$.
There are two cases: 
(i) ${\mathrm{S}_j = \mathrm{S}_i}$ or ${\mathrm{S}_j = \mathrm{W}_i}$ for some ${1\leq i<j\leq T}$; and
(ii) ${\mathrm{S}_j\neq \mathrm{S}_i}$ and ${\mathrm{S}_j\neq \mathrm{W}_i}$ for all ${1\leq i<j\leq T}$. 

First, consider the case (i). 
In this case, $\{1\}\cup\mathrm{S}_i\rightarrow \mathrm{W}_i\cup \mathrm{W}_j$. 
This is because ${\{1\}\cup\mathrm{S}_i\rightarrow \mathrm{W}_i}$, ${\{1\}\cup\mathrm{S}_j\rightarrow \mathrm{W}_j}$, and ${\mathrm{S}_j = \mathrm{S}_i}$ or $\mathrm{S}_j = \mathrm{W}_i$ (by assumption).
Note that ${\{1\}\cup\mathrm{S}_i}$ and ${\mathrm{W}_i\cup \mathrm{W}_j}$ are disjoint. 
Moreover, ${|\{1\}\cup\mathrm{S}_i| = 2}$ and ${|\mathrm{W}_i\cup \mathrm{W}_j|=2}$. 
Thus, given $\alpha=2$ messages indexed by $\{1\}\cup \mathrm{S}_i$, $\beta=2$ other messages indexed by  ${\mathrm{W}_i}\cup{\mathrm{W}_j}$ can be recovered. 


Now, consider the case (ii). 
Let ${\mathrm{W}^{*}_k = \mathrm{S}_k}$ for any ${k\in [T]}$, and let ${\mathrm{R}_{k,l}^{*} = \mathrm{W}_l}$ for any ${k,l\in [T]}$. 
By Lemma~\ref{lem:NC}, there exists some $\mathrm{S}_{k,l}^{*}$ such that the tuple $(\mathrm{R}_{k,l}^{*},\mathrm{S}_{k,l}^{*},\mathrm{W}_k^{*})$ exists. 
That is, ${\mathrm{R}_{k,l}^{*}\cup \mathrm{S}_{k,l}^{*}\rightarrow \mathrm{W}_k^{*}}$, or equivalently, $\mathrm{W}_l\cup \mathrm{S}_{k,l}^{*}\rightarrow \mathrm{S}_k$. 
First, suppose that $\mathrm{S}_{k,l}^{*}\neq \mathrm{W}_k$ for some $k,l\in [T]$. 
In this case, ${\mathrm{S}_{k,l}^{*}\subset \{1\}\cup \mathrm{S}_{[T]\setminus \{k\}}\cup\mathrm{W}_{[T]\setminus \{k\}}}$.
Note that ${\{1\}\cup \mathrm{S}_{[T]\setminus \{k\}}\rightarrow \mathrm{W}_{[T]\setminus \{k\}}}$.
Recall that ${\mathrm{W}_l\cup \mathrm{S}_{k,l}^{*}\rightarrow \mathrm{S}_k}$.
Since ${\mathrm{W}_l\subset \mathrm{W}_{[T]\setminus \{k\}}}$ and ${\mathrm{S}_{k,l}^{*}\subset \{1\}\cup \mathrm{S}_{[T]\setminus \{k\}}\cup\mathrm{W}_{[T]\setminus \{k\}}}$, then ${\{1\}\cup \mathrm{S}_{[T]\setminus \{k\}}\rightarrow \mathrm{W}_{[T]\setminus \{k\}}}\cup \mathrm{S}_k$. 
Since ${\{1\}\cup \mathrm{S}_{[T]\setminus \{k\}}}$ and ${\mathrm{W}_{[T]\setminus \{k\}}\cup \mathrm{S}_k}$ are disjoint, and ${|\{1\}\cup \mathrm{S}_{[T]\setminus \{k\}}| =|\mathrm{W}_{[T]\setminus \{k\}}\cup \mathrm{S}_k| = T}$, there exist $\alpha = T$ messages indexed by $\{1\}\cup{\mathrm{S}_{[T]\setminus \{k\}}}$ given which $\beta=T$ other messages indexed by ${\mathrm{W}_{[T]\setminus \{k\}}}\cup {\mathrm{S}_k}$ can be recovered.  
Next, suppose that $\mathrm{S}_{k,l}^{*}= \mathrm{W}_k$ for any $k,l\in [T]$. 
Fix arbitrary $k,l\in [T]$. 
Consider the two tuples $(\mathrm{R}^{*}_{k,l},\mathrm{S}^{*}_{k,l},\mathrm{W}^{*}_k)$ and $(\mathrm{R}^{*}_{l,k},\mathrm{S}^{*}_{l,k},\mathrm{W}^{*}_l)$, or equivalently, the two tuples $(\mathrm{W}_{l},\mathrm{S}^{*}_{k,l},\mathrm{S}_k)$ and $(\mathrm{W}_{k},\mathrm{S}^{*}_{l,k},\mathrm{S}_l)$. 
Note that ${\mathrm{W}_l\cup \mathrm{S}^{*}_{k,l} \rightarrow \mathrm{S}_k}$ and ${\mathrm{W}_k\cup \mathrm{S}^{*}_{l,k} \rightarrow \mathrm{S}_l}$. 
By assumption, ${\mathrm{S}^{*}_{k,l} = \mathrm{W}_k}$ and ${\mathrm{S}^{*}_{l,k} = \mathrm{W}_l}$. 
Thus, ${\mathrm{W}_l\cup \mathrm{W}_k \rightarrow \mathrm{S}_k}$ and ${\mathrm{W}_k\cup \mathrm{W}_l \rightarrow \mathrm{S}_l}$. 
This implies that ${\mathrm{W}_k\cup \mathrm{W}_l \rightarrow \mathrm{S}_k\cup \mathrm{S}_l}$. 
Since $\mathrm{W}_k\cup \mathrm{W}_l$ and $\mathrm{S}_k\cup \mathrm{S}_l$ are disjoint, and $|\mathrm{W}_k\cup \mathrm{W}_l| = |\mathrm{S}_k\cup \mathrm{S}_l|=2$, there exist $\alpha=2$ messages indexed by $\mathrm{W}_k\cup \mathrm{W}_l$ given which $\beta=2$ other messages indexed by $\mathrm{S}_k\cup \mathrm{S}_l$ can be recovered. 
\end{proof}

\begin{theorem}\label{thm:12}
For PIR-RSSI with $K$ messages, RSI's size ${M_1=1}$, and SSI's size ${M_2>1}$, the capacity is upper bounded by ${1/\min\{K-M_2-1,\lceil K/(M_2+1) \rceil\}}$.
\end{theorem}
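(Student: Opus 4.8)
The plan is to follow the reduction set up at the start of Section~\ref{sec:Conv}: it suffices to exhibit a set of $L$ messages from which all remaining $K-L$ messages can be recovered from the query and the answer, with $L\leq M_1+M_2=1+M_2$ when $K$ is small (i.e.\ $K-1-M_2<\lceil K/(M_2+1)\rceil$) and with $L\leq\lfloor KM_2/(M_2+1)\rfloor$ when $K$ is large. A short computation shows that for $M_2>1$ the small regime consists of the single value $K=M_2+2$. First I would dispatch this case directly from Lemma~\ref{lem:NC}: taking $\mathrm{W}^{*}\notin\{1\}$ and $\mathrm{R}^{*}=\{1\}$ yields an $M_2$-subset $\mathrm{S}^{*}$ with $\{1\}\cup\mathrm{S}^{*}\rightarrow\mathrm{W}^{*}$, and since $|\{1\}\cup\mathrm{S}^{*}\cup\mathrm{W}^{*}|=M_2+2=K$, the set $\{1\}\cup\mathrm{S}^{*}$ of size $L=1+M_2$ recovers the one remaining message.

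For the large regime the target, via Lemma~\ref{lem:Grow} (whose hypothesis reads $\alpha\geq\lceil M_1M_2/(M_2+1)\rceil=1$ and $\beta\geq\lceil\alpha/M_2\rceil$ when $M_1=1$), is to produce some $\alpha\geq 1$ messages from which $\beta\geq\lceil\alpha/M_2\rceil$ further messages are recoverable. I would fix a collection of tuples $(\mathrm{R}_i,\mathrm{S}_i,\mathrm{W}_i)_{i\in[T]}$ with $\mathrm{R}_i=\{1\}$ satisfying conditions (i)--(iv), so that $\{1\}\cup\mathrm{S}_i\rightarrow\mathrm{W}_i$ for each $i$ and $\bigcup_i(\mathrm{S}_i\cup\mathrm{W}_i)=[K]\setminus\{1\}$. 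Combining these gives $\{1\}\cup\mathrm{S}_{[T]}\rightarrow\mathrm{W}_{[T]}$, hence $\{1\}\cup\mathrm{S}_{[T]}$ recovers all of $[K]$. Writing $w$ for the number of demand indices $\mathrm{W}_i$ lying in no SSI set, one has $\alpha=|\{1\}\cup\mathrm{S}_{[T]}|=K-w$ and a recovered set of size $w$; since $K-\lceil K/(M_2+1)\rceil=\lfloor KM_2/(M_2+1)\rfloor$, whenever $w\geq\lceil K/(M_2+1)\rceil$ this set already has size $L=\alpha\leq\lfloor KM_2/(M_2+1)\rfloor$ and we are done.

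The hard part will be the complementary case $w<\lceil K/(M_2+1)\rceil$, where most demand indices recur as SSI elements of other tuples; here I would generalize the case analysis of Theorem~\ref{thm:11}. When some $\mathrm{S}_j$ meets an earlier $\mathrm{S}_i$ or contains an earlier $\mathrm{W}_i$, one chains the two recoveries (recover $\mathrm{W}_i$ from $\{1\}\cup\mathrm{S}_i$, then feed it into $\mathrm{S}_j$) to gain an extra recovered message without paying the full $M_2$-element SSI cost, improving the ratio $\beta/\alpha$. When instead all SSI sets are pairwise disjoint and avoid every earlier demand index, combining the tuples falls short of the target by exactly one recovered message (the cost of the single shared RSI index $1$), so I would introduce auxiliary tuples via Lemma~\ref{lem:NC}: treating a single element of some $\mathrm{S}_k$ as a new demand and a previously recovered $\mathrm{W}_l$ as the size-one RSI, one obtains an $M_2$-subset $\mathrm{S}^{*}$ with $\{\mathrm{W}_l\}\cup\mathrm{S}^{*}$ recovering that element, and then splits on whether $\mathrm{S}^{*}$ lies inside the current recoverable closure (yielding a balanced $\alpha=\beta$ configuration) or introduces a fresh demand index (yielding a small self-contained configuration).

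The principal technical obstacle, absent when $M_2=1$, is that these auxiliary SSI sets now contain $M_2>1$ elements whose partial overlaps with the existing closure must be tracked carefully in order to guarantee that the recovered set reaches $\lceil\alpha/M_2\rceil$; the clean element-wise dichotomies of Theorem~\ref{thm:11} must be replaced by a bookkeeping argument on $M_2$-set intersections. Once such an $(\alpha,\beta)$ pair is secured, Lemma~\ref{lem:Grow} upgrades it to the desired set of $L\leq\lfloor KM_2/(M_2+1)\rfloor$ messages, which completes the bound.
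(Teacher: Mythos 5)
Your skeleton is right and matches the paper's: the regime split (with the correct observation that the ``small'' regime is exactly $K=M_2+2$, handled directly from Lemma~\ref{lem:NC}), the reduction via Lemma~\ref{lem:Grow}, and the dichotomy between overlapping and disjoint blocks $\mathrm{S}_i\cup\mathrm{W}_i$. The overlapping case also goes through as you describe: with $\mathrm{I}=\mathrm{S}_j\setminus(\mathrm{S}_i\cup\mathrm{W}_i)$ one gets $\alpha=M_2+1+|\mathrm{I}|$, $\beta=2$, and $|\mathrm{I}|\leq M_2-1$ gives $\beta\geq\lceil\alpha/M_2\rceil$.

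The genuine gap is in the disjoint case, which is the entire content of the theorem (there the blocks partition $[K]\setminus\{1\}$, so $\{1\}\cup\mathrm{S}_{[T]}$ has size $TM_2+1$ and you are short by exactly one, as you note). Your proposed auxiliary tuples take a single element $s\in\mathrm{S}_k$ as the new demand and $\{\mathrm{W}_l\}$ as the RSI. This is the Theorem~\ref{thm:11} move, and for $M_2>1$ it does not obviously close: the resulting $\mathrm{S}^{*}$ has $M_2$ elements that can include $\mathrm{W}_k$ itself (whose recovery requires $s$, so the chain is circular) and can scatter across several blocks, and the escape hatch used in Theorem~\ref{thm:11} (``$\mathrm{S}^{*}=\mathrm{W}_k$ for all $k,l$'') is no longer an exhaustive alternative when $|\mathrm{S}^{*}|>1$. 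You flag this yourself as an unresolved ``bookkeeping argument,'' but that bookkeeping is precisely the proof. The paper avoids it by choosing the auxiliary tuples the other way around: keep the demand equal to the \emph{original} $\mathrm{W}_k$ and take the size-one RSI $\mathrm{R}^{*}_{k,l}$ from \emph{another} tuple's SSI set $\mathrm{S}_l$. Lemma~\ref{lem:NC} then yields $\mathrm{S}^{*}_{k,l}$ with only two clean alternatives: if some $\mathrm{S}^{*}_{k,l}\neq\mathrm{S}_k$, then $\{1\}\cup\mathrm{S}_{[T]\setminus\{k\}}\cup(\mathrm{S}^{*}_{k,l}\cap\mathrm{S}_k)\rightarrow\mathrm{W}_{[T]}$ gives $\alpha=(T-1)M_2+1+|\mathrm{I}|$ with $|\mathrm{I}|\leq M_2-1$ and $\beta=T$; if $\mathrm{S}^{*}_{k,l}=\mathrm{S}_k$ for all $k,l$, then $\mathrm{S}_k\cup\mathrm{S}_l\rightarrow\mathrm{W}_k\cup\mathrm{W}_l$ gives $\alpha=2M_2$, $\beta=2$. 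Without this (or an equivalent) choice of auxiliary tuples, your plan for the disjoint case is not a proof.
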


\begin{proof}
We prove the lemma for ${M_2=2}$ for ease of exposition. 
The same proof technique can be used for any ${M_2>2}$. 
In this case, ${K> M_1+M_2=3}$. 
Note that for ${K=4}$ (or $K\geq 5$), ${\min\{K-3,\lceil K/3 \rceil\}}$ is given by $K-3$ (or ${\lceil K/3 \rceil}$).
Thus, for $K=4$ (or $K\geq 5$), we need to show that there exist $L\leq 3$ (or $L\leq \lfloor 2K/3 \rfloor$) messages given which all other $K-L$ messages can be recovered.  

\emph{$K=4$:} Taking $\mathrm{R}_1=\{1\}$ and $\mathrm{W}_1=\{4\}$, it is obvious that ${\mathrm{S}_1 = \{2,3\}}$, i.e., the tuple $(\{1\},\{2,3\},\{4\})$ exists. 
Thus, ${\{1,2,3\}\rightarrow \{4\}}$, i.e., given $L=3$ messages the other $K-L=1$ message can be recovered. 

\emph{$K\geq 5$:} 
By Lemma~\ref{lem:Grow}, it suffices to show that there exist ${\alpha\geq \lceil M_1M_2/(M_2+1)\rceil =1}$ messages given which ${\beta\geq \lceil \alpha/M_2\rceil=\lceil \alpha/2\rceil}$ other messages can be recovered. 

Let $\{(\mathrm{R}_i,\mathrm{S}_i,\mathrm{W}_i)\}_{i\in [T]}$ be a collection of tuples as defined earlier.
Note that $\mathrm{R}_i= \{1\}$ for all $i\in [T]$. 
There are two cases: 
(i) ${\mathrm{S}_j\cap (\mathrm{S}_i\cup \mathrm{W}_i)\neq \emptyset}$ for some ${1\leq i<j\leq T}$; and  
(ii) ${\mathrm{S}_j\cap (\mathrm{S}_i\cup \mathrm{W}_i)=\emptyset}$ for all ${1\leq i<j\leq T}$. 

First, consider the case (i). 
Let ${\mathrm{I} := \mathrm{S}_j \setminus (\mathrm{S}_i\cup \mathrm{W}_i)}$. 
Note that ${|\mathrm{I}|\leq 1}$. 
This is because ${|\mathrm{I}| = |\mathrm{S}_j| - |\mathrm{S}_j\cap (\mathrm{S}_i\cup \mathrm{W}_i)|}$, ${|\mathrm{S}_i|=2}$, and ${|\mathrm{S}_j\cap (\mathrm{S}_i\cup \mathrm{W}_i)|\geq 1}$ (by assumption). 
In this case, ${\{1\}\cup \mathrm{S}_i\cup \mathrm{I}\rightarrow \mathrm{W}_i\cup \mathrm{W}_j}$. 
This is because ${\{1\}\cup \mathrm{S}_i\rightarrow \mathrm{W}_i}$, ${\{1\}\cup \mathrm{S}_j\rightarrow \mathrm{W}_j}$, ${\mathrm{S}_j = \mathrm{I}\cup (\mathrm{S}_j\setminus \mathrm{I})}$, and ${(\mathrm{S}_j\setminus \mathrm{I}) \subset (\mathrm{S}_i\cup\mathrm{W}_i)}$.
Note that ${\{1\}\cup \mathrm{S}_i\cup \mathrm{I}}$ and ${\mathrm{W}_i\cup\mathrm{W}_j}$ are disjoint, and ${|\{1\}\cup \mathrm{S}_i\cup \mathrm{I}|= 3+|\mathrm{I}|}$ and ${|\mathrm{W}_i\cup\mathrm{W}_j|=2}$. 
Thus, there exist $\alpha=3+|\mathrm{I}|$ messages indexed by ${\{1\}\cup \mathrm{S}_i\cup \mathrm{I}}$ given which $\beta=2$ other messages indexed by ${\mathrm{W}_i\cup\mathrm{W}_j}$ can be recovered. 
Note that ${\alpha=3+|\mathrm{I}|\geq 1}$ and ${\beta=2\geq \lceil \alpha/2\rceil = \lceil (3+|\mathrm{I}|)/2\rceil}$ since ${|\mathrm{I}|\leq 1}$. 

Now, consider the case (ii). 
First, suppose that for some $k,l\in [T]$, some ${\mathrm{R}_{k,l}^{*}\subset \mathrm{S}_l}$, and ${\mathrm{W}_k^{*}=\mathrm{W}_k}$, there exists a tuple ${(\mathrm{R}_{k,l}^{*},\mathrm{S}_{k,l}^{*},\mathrm{W}_k^{*})}$ such that  ${\mathrm{S}_{k,l}^{*}\neq \mathrm{S}_k}$.
Let ${\mathrm{I} := \mathrm{S}^{*}_{k,l}\cap \mathrm{S}_k}$. 
Note that ${|\mathrm{I}|\leq 1}$. 
This is because ${\mathrm{S}^{*}_{k,l}\neq \mathrm{S}_k}$ and ${|\mathrm{S}^{*}_{k,l}|=|\mathrm{S}_k|=2}$. 
It is easy to verify that ${\{1\}\cup \mathrm{S}_{[T]\setminus \{k\}}\cup \mathrm{I}\rightarrow \mathrm{W}_{[T]\setminus \{k\}}\cup \mathrm{W}_k}$.
Note that ${\{1\}\cup \mathrm{S}_{[T]\setminus \{k\}}\cup \mathrm{I}}$ and ${\mathrm{W}_{[T]\setminus \{k\}}\cup\mathrm{W}_k}$ are disjoint. 
Moreover, ${|\{1\}\cup \mathrm{S}_{[T]\setminus \{k\}}\cup \mathrm{I}|= 2T+|\mathrm{I}|-1}$ and ${|\mathrm{W}_{[T]\setminus \{k\}}\cup\mathrm{W}_k|= T}$. 
This implies that there exist ${\alpha=2T+|\mathrm{I}|-1}$ messages indexed by ${\{1\}\cup \mathrm{S}_{[T]\setminus \{k\}}\cup \mathrm{I}}$ given which ${\beta=T}$ other messages indexed by ${\mathrm{W}_{[T]\setminus \{k\}}\cup\mathrm{W}_k}$ can be recovered. 
Note that ${\alpha=2T+|\mathrm{I}|-1\geq 1}$ since ${T\geq 1}$, and ${\beta=T\geq \lceil \alpha/2\rceil =T+\lceil (|\mathrm{I}|-1)/2\rceil}$ since ${|\mathrm{I}|\leq 1}$. 
Next, suppose that for any ${k,l\in [T]}$, any ${\mathrm{R}_{k,l}^{*}\subset \mathrm{S}_l}$, and ${\mathrm{W}_k^{*}=\mathrm{W}_k}$, there exists a tuple ${(\mathrm{R}_{k,l}^{*},\mathrm{S}_{k,l}^{*},\mathrm{W}_k^{*})}$ such that ${\mathrm{S}_{k,l}^{*}= \mathrm{S}_k}$. 
Fix arbitrary $k,l\in [T]$. 
Consider the two tuples ${(\mathrm{R}^{*}_{k,l},\mathrm{S}^{*}_{k,l},\mathrm{W}^{*}_k)}$ and ${(\mathrm{R}^{*}_{l,k},\mathrm{S}^{*}_{l,k},\mathrm{W}^{*}_l)}$, or equivalently, the two tuples ${(\mathrm{R}^{*}_{k,l},\mathrm{S}_{k},\mathrm{W}_k)}$ and ${(\mathrm{R}^{*}_{l,k},\mathrm{S}_{l},\mathrm{W}_l)}$. 
Note that ${\mathrm{R}^{*}_{k,l}\cup \mathrm{S}_k\rightarrow \mathrm{W}_k}$ and ${\mathrm{R}^{*}_{l,k}\cup \mathrm{S}_l\rightarrow \mathrm{W}_l}$. 
Since ${\mathrm{R}^{*}_{k,l}\subset \mathrm{S}_l}$ and ${\mathrm{R}^{*}_{l,k}\subset \mathrm{S}_k}$ (by assumption), then ${\mathrm{S}_k\cup \mathrm{S}_l \rightarrow \mathrm{W}_k\cup \mathrm{W}_l}$. 
Since ${\mathrm{S}_k\cup \mathrm{S}_l}$ and ${\mathrm{W}_k\cup \mathrm{W}_l}$ are disjoint, and ${|\mathrm{S}_k\cup \mathrm{S}_l| = 4}$ and ${|\mathrm{W}_k\cup \mathrm{W}_l|=2}$, there exist ${\alpha=4}$ (${\geq 1}$) messages indexed by ${\mathrm{S}_k\cup \mathrm{S}_l}$ given which ${\beta=2}$ (${\geq \lceil \alpha/2\rceil = 2}$) other messages indexed by ${\mathrm{W}_k\cup \mathrm{W}_l}$ can be recovered. 
\end{proof}

\begin{theorem}\label{thm:21}
For PIR-RSSI with $K$ messages, RSI's size ${M_1>1}$, and SSI's size ${M_2=1}$, the capacity is upper bounded by ${1/\min\{K-M_1-1,\lceil K/2 \rceil\}}$. 
\end{theorem}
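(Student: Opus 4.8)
The plan is to follow the reduction set up just before Lemma~\ref{lem:NC} and the template of the proof of Theorem~\ref{thm:11}, now with the reusable set forced to have size $M_1>1$ and with $M_2=1$. Since $\min\{K-M_1-1,\lceil K/2\rceil\}$ equals $K-M_1-1$ exactly when $K\le 2M_1+1$, I would treat two regimes: the small regime $M_1+2\le K\le 2M_1+1$, where it suffices to exhibit $L\le M_1+1$ messages that determine the rest, and the large regime $K\ge 2M_1+2$, where it suffices to exhibit $L\le\lfloor K/2\rfloor$ such messages. In both regimes I start from a collection of tuples $\{(\mathrm{R}_i,\mathrm{S}_i,\mathrm{W}_i)\}_{i\in[T]}$ as defined above, with $\mathrm{R}_i=[M_1]$ and $\mathrm{S}_i,\mathrm{W}_i$ singletons. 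Since $[M_1]\cup\mathrm{S}_i\to\mathrm{W}_i$ for every $i$ and $\mathrm{S}_{[T]}\cup\mathrm{W}_{[T]}=[K]\setminus[M_1]$, the index set $\mathrm{I}_0:=[M_1]\cup\mathrm{S}_{[T]}$ satisfies $\mathrm{I}_0\to[K]$, so $\mathrm{I}_0$ already determines all messages, with $|\mathrm{I}_0|=M_1+s$ and $s:=|\mathrm{S}_{[T]}|\le T$.

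For the large regime I would invoke Lemma~\ref{lem:Grow} (with $M_2=1$), which reduces the task to producing disjoint index sets $\mathrm{I}\to\mathrm{J}$ with $|\mathrm{J}|\ge|\mathrm{I}|\ge\lceil M_1/2\rceil$, a balanced seed that the lemma then grows to $L\le\lfloor K/2\rfloor$. If $|\mathrm{I}_0|=M_1+s\le\lfloor K/2\rfloor$ I am done immediately, since $\mathrm{I}_0\to[K]$. Otherwise $\mathrm{I}_0$ spans more than half of $[K]$, which forces many distinct single-use sources $\mathrm{S}_i$, and I rebalance by recovering enough of these sources: for a chosen $\mathrm{C}\subseteq[T]$ I apply Lemma~\ref{lem:NC} with target $\mathrm{W}^{*}=\mathrm{S}_k$ ($k\in\mathrm{C}$) and reusable set an $M_1$-subset of the already-recovered demands $\mathrm{W}_{[T]}$, so that each $\mathrm{S}_k$ becomes recoverable without conditioning on $\mathrm{S}_{\mathrm{C}}$. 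Removing $\mathrm{S}_{\mathrm{C}}$ from the conditioning set then gives $\mathrm{I}:=[M_1]\cup\mathrm{S}_{[T]\setminus\mathrm{C}}\to\mathrm{W}_{[T]}\cup\mathrm{S}_{\mathrm{C}}=:\mathrm{J}$, and a short count ($|\mathrm{J}|-|\mathrm{I}|=2|\mathrm{C}|-M_1$ in the generic disjoint case) shows $|\mathrm{J}|\ge|\mathrm{I}|$ as soon as $|\mathrm{C}|\ge\lceil M_1/2\rceil$, while $|\mathrm{I}|\ge\lceil M_1/2\rceil$ holds automatically, yielding the seed.

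For the small regime the target $L\le M_1+1$ is comparatively generous, so it suffices to reduce $\mathrm{I}_0$ to a conditioning set of the form $[M_1]\cup\{v\}$. When $s\le 1$ this is immediate; when $s\ge 2$ I again recover all but one of the sources $\mathrm{S}_i$ through Lemma~\ref{lem:NC}, using reusable sets drawn from the recovered messages, leaving a single single-use source in the conditioning set. The smallest configurations, where there are too few recovered messages to support the source recoveries (for instance $K=M_1+2$, where $[K]\setminus[M_1]$ has only two indices), are dispatched directly by the single tuple $([M_1],\mathrm{S}_1,\mathrm{W}_1)$ exactly as in the base cases of Theorems~\ref{thm:11} and~\ref{thm:12}.

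The main obstacle, and the essential difference from the $M_1=1$ argument of Theorem~\ref{thm:11}, is this rebalancing forced by the size-$M_1$ reusable set. In Theorem~\ref{thm:11} each application of Lemma~\ref{lem:NC} recovers one extra message while spending a single base element, so recovering one source already balances the seed; here each recovery spends a whole $M_1$-subset, so recovering one source falls short and I must batch-recover about $\lceil M_1/2\rceil$ sources at once. Guaranteeing that this many recoveries go through requires a secondary case split paralleling the two sub-cases in the proof of Theorem~\ref{thm:11}, according to whether the auxiliary single-use index $\mathrm{S}^{*}$ returned by Lemma~\ref{lem:NC} coincides with a demand index $\mathrm{W}_k$ (the wrap-around case, which I would instead resolve by a self-contained balanced seed inside $\mathrm{W}_{[T]}\cup\mathrm{S}_{[T]}$) or not (in which case $\mathrm{S}^{*}$ lies in the recovered set and the recovery succeeds). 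Carrying the disjointness of $\mathrm{I}$ and $\mathrm{J}$ and the inequality $|\mathrm{J}|\ge|\mathrm{I}|$ through every sub-case, including the configurations with few tuples, is the delicate bookkeeping on which the whole argument rests.
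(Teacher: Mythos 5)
Your high-level roadmap matches the paper's: reduce to exhibiting a small determining set, fix a tuple collection $\{(\mathrm{R}_i,\mathrm{S}_i,\mathrm{W}_i)\}_{i\in[T]}$, note $[M_1]\cup\mathrm{S}_{[T]}\to[K]$, and when that set is too large, use Lemma~\ref{lem:NC} to recover some of the sources $\mathrm{S}_k$ so they can be dropped from the conditioning set, feeding a balanced seed into Lemma~\ref{lem:Grow}. But the proposal stops exactly where the proof actually lives, and the one concrete mechanism you do state is flawed.

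The central gap is your dichotomy for the auxiliary index returned by Lemma~\ref{lem:NC}: you claim $\mathrm{S}^{*}$ either equals the paired demand $\mathrm{W}_k$ (wrap-around) or ``lies in the recovered set.'' This is false as soon as more than one index is simultaneously absent from the conditioning set, which is unavoidable here: rebalancing requires dropping $|\mathrm{C}|=\lceil M_1/2\rceil$ sources at once (two or more for $M_1\geq 3$), and for each target $\mathrm{S}_k$, $k\in\mathrm{C}$, the protocol may return $\mathrm{S}^{*}=\mathrm{S}_l$ for another $l\in\mathrm{C}$, creating a cyclic dependency among the batch of recoveries that your argument does not break. A second instance of the same circularity sits inside your choice of $\mathrm{R}^{*}$: you take it from ``the already-recovered demands $\mathrm{W}_{[T]}$,'' but $\mathrm{W}_k$ for $k\in\mathrm{C}$ is recovered only via $[M_1]\cup\mathrm{S}_k\to\mathrm{W}_k$, i.e., only \emph{after} $\mathrm{S}_k$ is, so $\mathrm{R}^{*}$ must be drawn from $\mathrm{W}_{[T]\setminus\mathrm{C}}$ (or mixed with $[M_1]$), and near the boundary $K$ close to $2M_1+2$ there are too few such demands. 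The paper's proof (for $M_1=2$) resolves precisely these obstructions with the four-way case split on the containment structure of the $\mathrm{S}_i\cup\mathrm{W}_i$, the sub-cases (iii-1)--(iii-4), and the cyclic assignment $\mathrm{R}^{*}_l=\{1\}\cup\mathrm{W}_{l+1}$ that converts the all-wrap-around configuration into the seed $\{1\}\cup\mathrm{W}_{[T]}\to\mathrm{S}_{[T]}\cup\{2\}$; your ``self-contained balanced seed inside $\mathrm{W}_{[T]}\cup\mathrm{S}_{[T]}$'' asserts the conclusion of that construction without providing it, and the naive analogue for $M_1\geq 3$ is not balanced (the left side carries $M_1-1$ extra indices from $[M_1]$). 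The overlapping configurations (the paper's cases (i)--(iii)) and the small-$K$ regime with several disjoint pure-source indices face the same unaddressed circularity. Until the batch-recovery step is made to terminate against an adversarial choice of $\mathrm{S}^{*}$, the argument is a plan rather than a proof.
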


\begin{proof}
For ease of exposition, we prove the lemma for ${M_1=2}$. 
The proof for any ${M_1>2}$ follows from similar arguments. 
In this case, ${K> M_1+M_2=3}$. 
Note that for ${K\in \{4,5\}}$ (or $K\geq 6$), ${\min\{K-3,\lceil K/2 \rceil\}}$ is given by $K-3$ (or ${\lceil K/2 \rceil}$).
Thus, for $K\in \{4,5\}$ (or $K\geq 6$), we need to show that there exist $L\leq 3$ (or $L\leq \lfloor K/2 \rfloor$) messages given which all other $K-L$ messages can be recovered.  

\emph{$K\in \{4,5\}$:} For $K=4$, the tuple $(\{1,2\},\{3\},\{4\})$ exists. 
Thus, given $L=3$ messages indexed by $\{1,2,3\}$ the other $K-L=1$ message indexed by $\{4\}$ can be recovered. 
For $K=5$, without loss of generality, we can assume that the tuples $(\{1,2\},\{3\},\{4\})$ and $(\{1,2\},\{3\},\{5\})$ exist. 
Thus, given $L=3$ messages indexed by $\{1,2,3\}$ the other $K-L=2$ messages indexed by $\{4,5\}$ can be recovered. 

\emph{$K\geq 6$:} 
By Lemma~\ref{lem:Grow}, it suffices to show that there exist ${\alpha\geq \lceil M_1M_2/(M_2+1)\rceil =1}$ messages given which ${\beta\geq \lceil \alpha/M_2\rceil=\alpha}$ other messages can be recovered. 

Let ${\{(\mathrm{R}_i,\mathrm{S}_i,\mathrm{W}_i)\}_{i\in [T]}}$ be a collection of tuples as defined earlier. 
Note that ${\mathrm{R}_i = \{1,2\}}$ for all ${i\in [T]}$. 
There are four cases: 
(i) there exist some ${1\leq i<j<k\leq T}$ such that ${\mathrm{S}_j\subset (\mathrm{S}_i\cup \mathrm{W}_i)}$ and ${\mathrm{S}_k\subset ((\mathrm{S}_i\cup \mathrm{W}_i) \cup (\mathrm{S}_j\cup \mathrm{W}_j))}$; 
(ii) there exist some ${1\leq i<j<k<l\leq T}$ such that ${\mathrm{S}_j\subset (\mathrm{S}_i\cup \mathrm{W}_i)}$, ${\mathrm{S}_k\not\subset ((\mathrm{S}_i\cup \mathrm{W}_i) \cup (\mathrm{S}_j\cup \mathrm{W}_j))}$, and ${\mathrm{S}_l\subset (\mathrm{S}_k\cup \mathrm{W}_k)}$; 
(iii) there exist unique ${1\leq i<j\leq T}$ such that ${\mathrm{S}_j\subset (\mathrm{S}_i\cup\mathrm{W}_i)}$; and
(iv) there do not exist any ${1\leq i<j\leq T}$ such that ${\mathrm{S}_j\subset (\mathrm{S}_i\cup\mathrm{W}_i)}$.
Note that the cases (i) and (ii) are not mutually exclusive. 

First, consider the case (i).
In this case, it is easy to show that ${\{1,2\}\cup \mathrm{S}_i\rightarrow \mathrm{W}_i\cup\mathrm{W}_j\cup\mathrm{W}_k}$.
This is because ${\{1,2\}\cup \mathrm{S}_i\rightarrow \mathrm{W}_i}$, ${\{1,2\}\cup \mathrm{S}_j\rightarrow \mathrm{W}_j}$, ${\{1,2\}\cup \mathrm{S}_k\rightarrow \mathrm{W}_k}$, ${\mathrm{S}_j\subset (\mathrm{S}_i\cup \mathrm{W}_i)}$, and ${\mathrm{S}_k\subset ((\mathrm{S}_i\cup \mathrm{W}_i) \cup (\mathrm{S}_j\cup \mathrm{W}_j))}$.
Since ${\{1,2\}\cup\mathrm{S}_i}$ and ${\mathrm{W}_i\cup\mathrm{W}_j\cup\mathrm{W}_k}$ are disjoint, and ${|\{1,2\}\cup\mathrm{S}_i|=3}$ and ${|\mathrm{W}_i\cup\mathrm{W}_j\cup\mathrm{W}_k|=3}$, there exist $\alpha=3$ messages given which $\beta=3$ ($\geq \alpha$) other messages can be recovered. 

Now, consider the case (ii). 
In this case, it can be shown that ${\{1,2\}\cup \mathrm{S}_i\cup \mathrm{S}_k\rightarrow \mathrm{W}_i\cup\mathrm{W}_j\cup\mathrm{W}_k\cup \mathrm{W}_l}$. 
Note that ${\{1,2\}\cup\mathrm{S}_i\cup \mathrm{S}_k}$ and ${\mathrm{W}_i\cup\mathrm{W}_j\cup\mathrm{W}_k\cup \mathrm{W}_l}$ are disjoint, and ${|\{1,2\}\cup\mathrm{S}_i\cup \mathrm{S}_k|=4}$ and ${|\mathrm{W}_i\cup\mathrm{W}_j\cup\mathrm{W}_k\cup \mathrm{W}_l|=4}$. 
This implies that there exist ${\alpha=4}$ messages given which ${\beta=4}$ other messages can be recovered.

Next, consider the case (iii). 
Recall that in this case there exist unique ${1\leq i<j\leq T}$ such that ${\mathrm{S}_j\subset \mathrm{S}_i\cup \mathrm{W}_i}$. 
It is easy to verify that this case can only occur for odd $K$, i.e., ${K\in \{7,9,11,\dots\}}$. 
In addition, one can easily see that ${T = (K-1)/2\geq 3}$.
Note that ${[T]\setminus \{i,j\}\neq \emptyset}$. 
By assumption, for any ${k\in [T]\setminus \{i,j\}}$, ${\mathrm{S}_k\cup \mathrm{W}_k}$ and ${(\mathrm{S}_i\cup \mathrm{W}_i)\cup (\mathrm{S}_j\cup \mathrm{W}_j)}$ are disjoint. 

Fix an arbitrary ${k\in [T]\setminus \{i,j\}}$. 
Let $\mathrm{R}_k^{*}$ be an arbitrary $2$-subset of ${[K]\setminus (\mathrm{S}_k \cup \mathrm{W}_k)}$. 
By Lemma~\ref{lem:NC}, there exists a tuple ${(\mathrm{R}_k^{*},\mathrm{S}_k^{*},\mathrm{W}_k)}$ for some $\mathrm{S}^{*}_k$. 
That is, ${\mathrm{R}_k^{*}\cup \mathrm{S}^{*}_k\rightarrow \mathrm{W}_k}$.

First, suppose that ${\mathrm{S}^{*}_k\neq \mathrm{S}_k}$. 
Thus, ${\mathrm{S}^{*}_k\subset [K]\setminus (\mathrm{S}_k \cup \mathrm{W}_k)}$.
It is easy to see that ${\{1,2\}\cup \mathrm{S}_i\rightarrow \mathrm{W}_i\cup \mathrm{W}_j}$.
This is because ${\{1,2\}\cup \mathrm{S}_i\rightarrow \mathrm{W}_i}$, 
${\{1,2\}\cup \mathrm{S}_j\rightarrow \mathrm{W}_j}$, and 
${\mathrm{S}_j\subset \mathrm{S}_i\cup \mathrm{W}_i}$.
Note also that ${\{1,2\}\cup \mathrm{S}_{[T]\setminus \{i,j,k\}}\rightarrow \mathrm{W}_{[T]\setminus \{i,j,k\}}}$. 
This is because ${\{1,2\}\cup \mathrm{S}_l\rightarrow \mathrm{W}_l}$ for all ${l\in [T]\setminus \{i,j,k\}}$.
Thus, ${\{1,2\}\cup \mathrm{S}_{[T]\setminus \{j,k\}}\rightarrow \mathrm{W}_{[T]\setminus \{k\}}}$. 
By assumption, ${(\mathrm{R}_k^{*}\cup \mathrm{S}^{*}_k)\subset \{1,2\}\cup  \mathrm{S}_{[T]\setminus \{k\}}\cup \mathrm{W}_{[T]\setminus \{k\}}}$. 
In addition, ${\mathrm{R}_k^{*}\cup \mathrm{S}^{*}_k\rightarrow \mathrm{W}_k}$. 
By combining these results, it then follows that ${\{1,2\}\cup \mathrm{S}_{[T]\setminus \{j,k\}}\rightarrow \mathrm{W}_{[T]}}$.
It is easy to verify that 
${\{1,2\}\cup \mathrm{S}_{[T]\setminus \{j,k\}}}$ and ${\mathrm{W}_{[T]}}$ are disjoint, and 
${|\{1,2\}\cup \mathrm{S}_{[T]\setminus \{j,k\}}|=|\mathrm{W}_{[T]}| = T}$. 
This implies that there exist ${\alpha=T}$ messages given which ${\beta=T}$ other messages can be recovered. 

Next, suppose that $\mathrm{S}^{*}_l = \mathrm{S}_l$ for all ${l\in [T]\setminus \{i,j\}}$.
That is, for any ${l\in [T]\setminus \{i,j\}}$, ${\mathrm{R}^{*}_l\cup \mathrm{S}_l\rightarrow \mathrm{W}_l}$ for any arbitrary $2$-subset ${\mathrm{R}^{*}_l\subset [K]\setminus (\mathrm{S}_l\cup \mathrm{W}_l)}$.
Recall that ${\mathrm{S}_j\subset \mathrm{S}_i\cup \mathrm{W}_i}$.
Thus, 
${\mathrm{S}_j = \mathrm{S}_i}$, or 
${\mathrm{S}_j = \mathrm{W}_i}$. 
We only present the proof for the case of ${\mathrm{S}_j = \mathrm{S}_i}$. 
The proof for the case of ${\mathrm{S}_j = \mathrm{W}_i}$ is the same except that $\mathrm{S}_i$ needs to be replaced by $\mathrm{W}_i$ everywhere.  

Fix an arbitrary ${k\in [T]\setminus \{i,j\}}$. 
By Lemma~\ref{lem:NC}, there exists a tuple 
${(\mathrm{S}_k\cup \mathrm{W}_k,\mathrm{S}^{*},\mathrm{W}_i)}$ for some $\mathrm{S}^{*}$. 
That is, ${\mathrm{S}_{k}\cup\mathrm{W}_k\cup \mathrm{S}^{*}\rightarrow \mathrm{W}_i}$.
There are four cases: 
(iii-1) ${\mathrm{S}^{*}=\mathrm{S}_i}$; 
(iii-2) ${\mathrm{S}^{*}=\mathrm{W}_j}$;
(iii-3) ${\mathrm{S}^{*}\subset \{1,2\}}$; and
(iii-4) ${\mathrm{S}^{*}\subset \mathrm{S}_{[T]\setminus \{i,j,k\}}\cup \mathrm{W}_{[T]\setminus \{i,j,k\}}}$. 

First, consider the case (iii-1). 
In this case, ${\mathrm{S}^{*}=\mathrm{S}_i}$.
That is, ${\mathrm{S}_{k}\cup\mathrm{W}_k\cup \mathrm{S}_{i}\rightarrow \mathrm{W}_i}$.
Lemma~\ref{lem:NC} implies that there exists a tuple ${(\mathrm{S}_{k}\cup\mathrm{W}_k,\mathrm{S}^{*}_j,\mathrm{W}_j)}$ for some $\mathrm{S}^{*}_j$.
That is, ${\mathrm{S}_{k}\cup\mathrm{W}_k\cup \mathrm{S}^{*}_j\rightarrow \mathrm{W}_j}$. 
There are two cases: ${\mathrm{S}^{*}_j\neq \{2\}}$, and ${\mathrm{S}^{*}_j= \{2\}}$. 
We present the proof for ${\mathrm{S}^{*}_j\neq \{2\}}$. 
The proof for $\mathrm{S}^{*}_j= \{2\}$ is the same except that $\{1\}$ needs to be replaced by $\{2\}$ everywhere.
Assume that ${\mathrm{S}^{*}_j\neq \{2\}}$. 
Take ${\mathrm{R}^{*}_k = \{1\}\cup \mathrm{S}_i}$. 
Since ${k\in [T]\setminus \{i,j\}}$ and ${\mathrm{R}^{*}_k\subset [K]\setminus (\mathrm{S}_k\cup \mathrm{W}_k)}$, then ${\mathrm{R}^{*}_k\cup\mathrm{S}_k\rightarrow \mathrm{W}_k}$ (by assumption). 
Equivalently, ${\{1\}\cup \mathrm{S}_i\cup \mathrm{S}_k\rightarrow \mathrm{W}_k}$. 
Similarly, it can be shown that ${\{1\}\cup \mathrm{S}_i\cup \mathrm{S}_l\rightarrow \mathrm{W}_l}$ for all ${l\in [T]\setminus \{i,j\}}$. 
Thus, ${\{1\}\cup \mathrm{S}_i\cup  \mathrm{S}_{[T]\setminus \{i,j\}})\rightarrow \mathrm{W}_{[T]\setminus \{i,j\}}}$, or equivalently, 
${\{1\}\cup \mathrm{S}_{[T]\setminus \{j\}}\rightarrow \mathrm{W}_{[T]\setminus \{i,j\}}}$. 
Recall that ${\mathrm{S}_{k}\cup\mathrm{W}_k\cup \mathrm{S}^{*}\rightarrow \mathrm{W}_{i}}$. 
Since ${\mathrm{S}^{*}=\mathrm{S}_i}$ by assumption, 
then ${\mathrm{S}_{k}\cup\mathrm{W}_k\cup \mathrm{S}_{i}\rightarrow \mathrm{W}_{i}}$.  
By combining ${\{1\}\cup \mathrm{S}_{[T]\setminus \{j\}}\rightarrow \mathrm{W}_{[T]\setminus \{i,j\}}}$ and ${\mathrm{S}_k\cup \mathrm{W}_k\cup \mathrm{S}_i\rightarrow \mathrm{W}_i}$, 
it readily follows that ${\{1\}\cup  \mathrm{S}_{[T]\setminus \{j\}}\rightarrow  \mathrm{W}_{[T]\setminus \{j\}}}$.
Recall also that $\mathrm{S}_k\cup \mathrm{W}_k\cup \mathrm{S}^{*}_j\rightarrow \mathrm{W}_j$.
Since ${\mathrm{S}^{*}_j\neq \{2\}}$ and ${\mathrm{S}^{*}_i\neq \mathrm{S}_i=\mathrm{S}_j}$, then ${\mathrm{S}^{*}_j\subset \{1\}\cup  \mathrm{S}_{[T]}\cup\mathrm{W}_{[T]}}$. 
By combining ${\{1\}\cup \mathrm{S}_{[T]\setminus \{j\}}\rightarrow  \mathrm{W}_{[T]\setminus \{j\}}}$ and $\mathrm{S}_k\cup \mathrm{W}_k\cup \mathrm{S}^{*}_j\rightarrow \mathrm{W}_j$, 
it then follows that ${\{1\}\cup \mathrm{S}_{[T]\setminus \{j\}}\rightarrow \mathrm{W}_{[T]}}$. 
It is easy to verify that ${\{1\}\cup  \mathrm{S}_{[T]\setminus \{j\}}}$ and ${\mathrm{W}_{[T]}}$ are disjoint, and ${|\{1\}\cup \mathrm{S}_{[T]\setminus \{j\}}|=|\mathrm{W}_{[T]}|=T}$. 
Thus, there exist ${\alpha=T}$ messages given which ${\beta=T}$ other messages can be recovered.


Next, consider the case (iii-2). 
In this case, ${\mathrm{S}^{*}=\mathrm{W}_j}$.
That is, ${\mathrm{S}_{k}\cup\mathrm{W}_k\cup \mathrm{W}_{j}\rightarrow \mathrm{W}_i}$.
By Lemma~\ref{lem:NC}, there exists a tuple ${(\mathrm{S}_{k}\cup\mathrm{W}_k,\mathrm{S}^{*}_i,\mathrm{S}_i)}$ for some $\mathrm{S}^{*}_i$.
That is, ${\mathrm{S}_{k}\cup\mathrm{W}_k\cup \mathrm{S}^{*}_i\rightarrow \mathrm{S}_i}$. 
There are two cases: ${\mathrm{S}^{*}_i\neq \{2\}}$, and ${\mathrm{S}^{*}_i= \{2\}}$. 
Similarly as in the case (iii-1), we present the proof for ${\mathrm{S}^{*}_i\neq \{2\}}$. 
The proof for $\mathrm{S}^{*}_i= \{2\}$ is the same except that $\{1\}$ needs to be replaced by $\{2\}$ everywhere.
Assume that ${\mathrm{S}^{*}_i\neq \{2\}}$. 
Using similar arguments as in the case (iii-1) and noting that ${\mathrm{S}_{k}\cup\mathrm{W}_k\cup \mathrm{W}_{j}\rightarrow \mathrm{W}_{i}}$, it can be shown that
${\{1\}\cup \mathrm{S}_{[T]\setminus \{i,j\}}\cup \mathrm{W}_j\rightarrow  \mathrm{W}_{[T]\setminus \{j\}}}$.
Recall that $\mathrm{S}_k\cup \mathrm{W}_k\cup \mathrm{S}^{*}_i\rightarrow \mathrm{S}_i$.
Since ${\mathrm{S}^{*}_i\neq \{2\}}$ and ${\mathrm{S}^{*}_i\neq \mathrm{S}_i=\mathrm{S}_j}$, 
then ${\mathrm{S}^{*}_i\subset \{1\}\cup \mathrm{S}_{[T]\setminus \{j\}}\cup\mathrm{W}_{[T]\setminus \{j\}}\cup \mathrm{W}_j}$. 
Thus, by combining ${\{1\}\cup \mathrm{S}_{[T]\setminus \{i,j\}})\cup \mathrm{W}_j\rightarrow \mathrm{W}_{[T]\setminus \{j\}}}$ and $\mathrm{S}_k\cup \mathrm{W}_k\cup \mathrm{S}^{*}_i\rightarrow \mathrm{S}_i$, 
it follows that ${\{1\}\cup \mathrm{S}_{[T]\setminus \{i,j\}}\cup \mathrm{W}_j\rightarrow \mathrm{W}_{[T]\setminus \{j\}}\cup \mathrm{S}_i}$. 
One can easily verify that ${\{1\}\cup  \mathrm{S}_{[T]\setminus \{i,j\}}\cup \mathrm{W}_j}$ and ${\mathrm{W}_{[T]\setminus \{j\}}\cup \mathrm{S}_i}$ are disjoint, and ${|\{1\}\cup \mathrm{S}_{[T]\setminus \{i,j\}}\cup \mathrm{W}_j|=|\mathrm{W}_{[T]\setminus \{j\}}\cup \mathrm{S}_i|=T}$. 
Thus, there exist ${\alpha=T}$ messages given which ${\beta=T}$ other messages can be recovered.

Now, consider the case (iii-3). 
In this case, ${\mathrm{S}^{*}\subset \{1,2\}}$. 
Without loss of generality, assume that ${\mathrm{S}^{*} = \{1\}}$. 
That is, ${\mathrm{S}_{k}\cup\mathrm{W}_k\cup \{1\}\rightarrow \mathrm{W}_i}$.
Lemma~\ref{lem:NC} implies that there exists a tuple ${(\mathrm{S}_{k}\cup\mathrm{W}_k,\mathrm{S}^{*}_i,\mathrm{S}_i)}$ for some $\mathrm{S}^{*}_i$.
That is, ${\mathrm{S}_{k}\cup\mathrm{W}_k\cup \mathrm{S}^{*}_i\rightarrow \mathrm{S}_i}$. 
There are two cases: ${\mathrm{S}^{*}_i\neq \{2\}}$, and ${\mathrm{S}^{*}_i= \{2\}}$.
First, suppose that ${\mathrm{S}^{*}_i\neq \{2\}}$. 
Using similar arguments as in the case (iii-2) and noting that ${\mathrm{S}_{k}\cup\mathrm{W}_k\cup \{1\}\rightarrow \mathrm{W}_i}$ and ${\mathrm{S}_{k}\cup\mathrm{W}_k\cup \mathrm{S}^{*}_i\rightarrow \mathrm{S}_i}$, one can show that 
${\{1\}\cup \mathrm{S}_{[T]\setminus \{i,j\}}\cup \mathrm{W}_j\rightarrow  \mathrm{W}_{[T]\setminus \{j\}}\cup \mathrm{S}_i}$.
Similarly as in the case (iii-2), it then follows that there exist ${\alpha=T}$ messages given which ${\beta=T}$ (${\geq \alpha}$) other messages can be recovered.
Now, suppose that $\mathrm{S}^{*}_i = \{2\}$. 
Using similar arguments as before, 
it can be shown that 
${\{1,2\}\cup \mathrm{S}_{[T]\setminus \{i,j\}}\rightarrow \mathrm{W}_{[T]\setminus \{j\}}\cup \mathrm{S}_i}$. 
It is easy to verify that ${\{1,2\}\cup  \mathrm{S}_{[T]\setminus \{i,j\}}}$ and ${ \mathrm{W}_{[T]\setminus \{j\}}\cup \mathrm{S}_i}$ are disjoint, and ${|\{1,2\}\cup \mathrm{S}_{[T]\setminus \{i,j\}}|=|\mathrm{W}_{[T]\setminus \{j\}}\cup \mathrm{S}_i|=T}$. 
Thus, there exist ${\alpha=T}$ messages given which ${\beta=T}$ other messages can be recovered.

Lastly, consider the case (iii-4). 
In this case, ${\mathrm{S}^{*}\subset  \mathrm{S}_{[T]\setminus \{i,j,k\}}\cup \mathrm{W}_{[T]\setminus \{i,j,k\}}}$.
Using similar arguments as in the previous cases, it can be shown that ${\{1\}\cup  \mathrm{S}_{[T]\setminus \{i,j\}}\rightarrow \mathrm{W}_{[T]\setminus \{i,j\}}}$.
Note that ${\mathrm{S}_{k}\cup\mathrm{W}_k\cup \mathrm{S}^{*}\rightarrow \mathrm{W}_i}$ and 
${\mathrm{S}^{*}\subset  \mathrm{S}_{[T]\setminus \{i,j\}}\cup \mathrm{W}_{[T]\setminus \{i,j\}}}$. 
Thus,
${\{1\}\cup \mathrm{S}_{[T]\setminus \{i,j\}}\rightarrow \mathrm{W}_{[T]\setminus \{j\}}}$. 
One can easily verify that ${\{1\}\cup  \mathrm{S}_{[T]\setminus \{i,j\}}}$ and ${ \mathrm{W}_{[T]\setminus \{j\}}}$ are disjoint, and ${|\{1\}\cup  \mathrm{S}_{[T]\setminus \{i,j\}}|=|\mathrm{W}_{[T]\setminus \{j\}}|=T-1}$. 
Thus, there exist ${\alpha=T-1}$ messages given which ${\beta=T-1}$ other messages can be recovered.

Now, consider the case (iv). 
Recall that in this case there do not exist any ${1\leq i<j\leq T}$ such that ${\mathrm{S}_j\subset (\mathrm{S}_i\cup\mathrm{W}_i)}$.
It is easy to verify that this case can only occur for even $K$, i.e., ${K\in \{6,8,10,\dots\}}$. 
Note also that ${T = K/2-1\geq 2}$.

Fix an arbitrary ${i\in [T]}$. 
Let $\mathrm{R}_i^{*}$ be an arbitrary $2$-subset of ${[K]\setminus (\mathrm{S}_i \cup \mathrm{W}_i)}$. 
By Lemma~\ref{lem:NC}, there exists a tuple ${(\mathrm{R}_i^{*},\mathrm{S}_i^{*},\mathrm{S}_i)}$ for some $\mathrm{S}^{*}_i$. 
That is, ${\mathrm{R}_i^{*}\cup \mathrm{S}^{*}_i\rightarrow \mathrm{S}_i}$.

First, suppose that ${\mathrm{S}^{*}_i\neq \mathrm{W}_i}$. 
Thus, ${\mathrm{S}^{*}_i\subset [K]\setminus (\mathrm{S}_i \cup \mathrm{W}_i)}$.
Note that ${\{1,2\}\cup \mathrm{S}_{[T]\setminus \{i\}}\rightarrow  \mathrm{W}_{[T]\setminus \{i\}}}$. 
This is because ${\{1,2\}\cup \mathrm{S}_l\rightarrow \mathrm{W}_l}$ for all ${l\in [T]\setminus \{i\}}$.
By assumption, ${(\mathrm{R}_i^{*}\cup \mathrm{S}^{*}_i)\subset \{1,2\}\cup  \mathrm{S}_{[T]\setminus \{i\}}\cup \mathrm{W}_{[T]\setminus \{i\}}}$. 
Moreover, ${\mathrm{R}_i^{*}\cup \mathrm{S}^{*}_i\rightarrow \mathrm{S}_i}$. 
By combining these results, it follows that ${\{1,2\}\cup \mathrm{S}_{[T]\setminus \{i\}}\rightarrow  \mathrm{W}_{[T]\setminus \{i\}}\cup \mathrm{S}_i}$.
Note that 
${\{1,2\}\cup \mathrm{S}_{[T]\setminus \{i\}}\cup \mathrm{W}_{[T]\setminus \{i\}}\cup \mathrm{S}_i = [K]\setminus \mathrm{W}_i}$.
Recall also that ${\{1,2\}\cup \mathrm{S}_{i}\rightarrow \mathrm{W}_i}$. 
Obviously, ${\mathrm{S}^{*}\subset [K]\setminus \mathrm{W}_i}$. 
This immediately implies that 
${\{1,2\}\cup \mathrm{S}_{[T]\setminus \{i\}}\rightarrow \mathrm{W}_{[T]}\cup \mathrm{S}_i}$. 
Note that
${\{1,2\}\cup \mathrm{S}_{[T]\setminus \{i\}}}$ and ${\mathrm{W}_{[T]}\cup \mathrm{S}_i}$ are disjoint, and 
${|\{1,2\}\cup \mathrm{S}_{[T]\setminus \{i\}}|=|\mathrm{W}_{[T]})\cup \mathrm{S}_i| = T+1}$. 
Thus, there exist ${\alpha=T+1}$ messages given which ${\beta=T+1}$ other messages can be recovered. 

Now, suppose that $\mathrm{S}^{*}_l = \mathrm{W}_l$ for all ${l\in [T]}$.
That is, for any ${l\in [T]}$, ${\mathrm{R}^{*}_l\cup \mathrm{W}_l\rightarrow \mathrm{S}_l}$ for any arbitrary $2$-subset ${\mathrm{R}^{*}_l\subset [K]\setminus (\mathrm{S}_l\cup \mathrm{W}_l)}$.
Let ${\mathrm{R}^{*}_l = \{1\} \cup \mathrm{W}_{l+1}}$ for all ${1\leq l<T}$, and let ${\mathrm{R}^{*}_T = \{1\} \cup \mathrm{W}_{1}}$. 
Observe that ${\{1\}\cup \mathrm{W}_l\rightarrow \mathrm{S}_l}$ for all ${1\leq l<T}$, and ${\{1\}\cup \mathrm{W}_1\rightarrow \mathrm{S}_T}$. 
Thus, ${\{1\}\cup \mathrm{W}_{[T]}\rightarrow \mathrm{S}_{[T]}}$. 
Note that ${\{1\}\cup \mathrm{W}_{[T]}\cup\mathrm{S}_{[T]} = [K]\setminus \{2\}}$. 
By Lemma~\ref{lem:NC}, for any arbitrary $2$-subset ${\mathrm{R}^{*}\subset [K]\setminus \{2\}}$, there exists a tuple ${(\mathrm{R}^{*},\mathrm{S}^{*},\{2\})}$ for some ${\mathrm{S}^{*}\subset [K]\setminus (\{2\}\cup \mathrm{R}^{*})}$.
That is, ${\mathrm{R}^{*}\cup \mathrm{S}^{*}\rightarrow \{2\}}$. 
Since ${\mathrm{R}^{*}\cup \mathrm{S}^{*}\subset [K]\setminus \{2\}}$, then ${\{1\}\cup  \mathrm{W}_{[T]}\rightarrow \mathrm{S}_{[T]}\cup \{2\}}$. 
Note that ${\{1\}\cup \mathrm{W}_{[T]}}$ and ${\mathrm{S}_{[T]}\cup \{2\}}$ are disjoint, and ${|\{1\}\cup \mathrm{W}_{[T]})| = |\mathrm{S}_{[T]}\cup \{2\}| = T+1}$. 
Thus, there exist ${\alpha=T+1}$ messages given which ${\beta=T+1}$ other messages can be recovered. 
\end{proof}

\section{Proof of Theorem~\ref{thm:Ach}}\label{sec:Ach}
In this section, we present the achievability scheme. 
To achieve the rate $1/(K-M_1-M_2)$ or $1/\lceil K/(M_2+1)\rceil$, we use a modified version of the MDS Code scheme of~\cite{KGHERS2020} or a modified version of the Partition-and-Code scheme of~\cite{KGHERS2020}, respectively. 
Note that the MDS Code scheme was originally designed for ${M_1\geq 1, M_2=0}$, and the Partition-and-Code scheme was originally designed for ${M_1=0,M_2\geq 1}$. 

\emph{MDS Code based PIR-RSSI Scheme:} 
For ease of notation, we define ${P := K-M_1-M_2}$. 
First, the user constructs an arbitrary $P\times K$ matrix $\mathrm{G}$ that generates a ${[K,P]}$ MDS code (over $\mathbbmss{F}_q$).
The user then sends $\mathrm{G}$ to the server as the query. 
For each ${i\in [P]}$, let $\mathrm{g}_i$ be the $i$th row of $\mathrm{G}$.  
Given $\mathrm{G}$, the server computes ${\mathrm{Y}_i=\mathrm{g}_i[\mathrm{X}^{\mathsf{T}}_1,\dots,\mathrm{X}^{\mathsf{T}}_K]^{\mathsf{T}}}$ for each ${i\in [P]}$, and sends ${\mathrm{Y}_1,\dots,\mathrm{Y}_{P}}$ back to the user as the answer. 

The rate of the MDS Code based scheme is equal to ${1/P}$, or equivalently, ${1/(K-M_1-M_2)}$. 
This is because ${\mathrm{g}_1,\dots,\mathrm{g}_{P}}$ are linearly independent, and ${\mathbf{X}_1,\dots,\mathbf{X}_K}$ are independent and uniformly distributed over $\mathbbmss{F}_q^n$, and consequently, ${\mathbf{Y}_1,\dots,\mathbf{Y}_{P}}$ are independent and uniformly distributed over $\mathbbmss{F}_q^n$. 
Thus, the amount of information downloaded from the server is ${PB}$ bits, where ${B=n\log_2 q}$ is the amount of information in a message (in bits). 
Since the amount of information required by the user is $B$ bits, the rate of the scheme is equal to ${B/(PB) = 1/P}$. 

By the properties of MDS codes, it is easy to show that 
the MDS Code based scheme satisfies the privacy condition. 
Note that the minimum distance of the MDS code generated by the matrix $\mathrm{G}$ is equal to ${K-P+1 = M_1+M_2+1}$. 
Thus, for any $M_1$-subset ${\mathrm{R}^{*}\subset [K]}$, any $M_2$-subset ${\mathrm{S}^{*}\subset [K]\setminus \mathrm{R}^{*}}$, and any ${\mathrm{W}^{*}\in [K]\setminus (\mathrm{R}^{*}\cup\mathrm{S}^{*})}$, there exists a row-vector---unique up to scalar multiplication---in the row space of $\mathrm{G}$ whose support is $\mathrm{R}^{*}\cup\mathrm{S}^{*}\cup\{\mathrm{W}^{*}\}$.
This implies that given $\mathrm{G}$, the probability that the RSI's index set is $\mathrm{R}^{*}$, the SSI's index set is $\mathrm{S}^{*}$, and the demand index is $\mathrm{W}^{*}$, is the same for every $(\mathrm{R}^{*},\mathrm{S}^{*},\mathrm{W}^{*})$ defined as above. 
Thus, the MDS Code based scheme not only satisfies the privacy condition by preserving the privacy of the demand index and the RSI's index set, but it also keeps the SSI's index set private. 
It is also easy to see that the MDS Code based scheme satisfies the recoverability condition. 
Since ${|\mathrm{R}\cup\mathrm{S}\cup\mathrm{W}|=M_1+M_2+1}$, there exists a row-vector---unique up to scalar multiplication---in the row space of $\mathrm{G}$ whose support is $\mathrm{R}\cup\mathrm{S}\cup\mathrm{W}$.
This implies that there exists ${\mathrm{Y}\in \mathbbmss{F}_q^n}$ in the linear span of ${\mathrm{Y}_1,\dots,\mathrm{Y}_{P}}$ such that $\mathrm{Y}$ is a linear combination of the demand message $\mathrm{X}_{\mathrm{W}}$, the RSI message(s) $\mathrm{X}_{\mathrm{R}}$, and the SSI message(s) $\mathrm{X}_{\mathrm{S}}$ (but no other message).
Given $\mathrm{G}$ and  ${\mathrm{Y}_1,\dots,\mathrm{Y}_{P}}$, the user can thus compute $\mathrm{Y}$ and recover $\mathrm{X}_{\mathrm{W}}$ by subtracting off the contribution of $\mathrm{X}_{\mathrm{R}}$ and $\mathrm{X}_{\mathrm{S}}$ from $\mathrm{Y}$. 

\emph{Partition-and-Code based PIR-RSSI scheme:}
For ease of notation, we define ${P:= \lceil K/(M_2+1)\rceil}$. 
First, the user partitions the $K$ message indices ${1,\dots,K}$ into $P$ parts as follows. 
The size of the first ${P-1}$ parts is ${M_2+1}$, and the size of the last part is ${K-(P-1)(M_2+1)}$ ($\leq M_2+1$). 
To partition the $K$ messages indices, the user randomly assigns the demand index $\mathrm{W}$ and the RSI message indices $\mathrm{R}$ to these $P$ parts.
Let $i^{*}\in [P]$ be the index of the part that contains the demand index $\mathrm{W}$.
Let $\alpha_{i^{*}}$ be the size of the part $i^{*}$, and let $r_{i^{*}}$ be the number of RSI message indices that are assigned to the part $i^{*}$. 
Note that ${r_{i^{*}}\geq 0}$ and ${\alpha_{i^{*}}\leq M_2+1}$. 
Next, the user randomly selects ${s_{i^{*}}:= \alpha_{i^{*}}-r_{i^{*}}-1}$ SSI message indices from $\mathrm{S}$, and assigns them to the part $i^{*}$. 
Note that ${s_{i^{*}} = \alpha_{i^{*}}-r_{i^{*}}-1 \leq \alpha_{i^{*}}-1\leq M_2}$.  
Lastly, the user randomly assigns the remaining (not-yet-assigned) message indices to the remaining parts. 
The user then sends the constructed partition (with $P$ parts) to the server as the query. 
The server sends ${\mathrm{Y}_1,\dots,\mathrm{Y}_P}$ back to the user as the answer, where $\mathrm{Y}_i$ is the sum of the messages whose indices are assigned to the part $i$. 

By construction, the combination coefficient vectors corresponding to ${\mathrm{Y}_1,\dots,\mathrm{Y}_P}$ are linearly independent. 
This is because the parts of the partition are mutually disjoint. 
Similarly as in the case of the MDS Code based scheme, it then follows that
${\mathbf{Y}_1,\dots,\mathbf{Y}_P}$ are independent and uniformly distributed over $\mathbbmss{F}_q^n$, and 
consequently, the rate of the Partition-and-Code based scheme is equal to ${1/P}$, or equivalently, ${1/\lceil K/(M_2+1)\rceil}$. 

The Partition-and-Code based scheme satisfies the privacy condition because the index of the demand message ($\mathrm{W}$) and the indices of the RSI messages ($\mathrm{R}$), are randomly assigned to the parts of the partition. 
The recoverability condition is also satisfied because the part $i^{*}$, i.e., the part that contains the demand index $\mathrm{W}$, does not contain any message index not belonging to ${\mathrm{W}\cup\mathrm{R}\cup\mathrm{S}}$. 
In particular, the part $i^{*}$ consists of the demand index, $r_{i^{*}}$ (out of $M_1$) RSI message indices, and $s_{i^{*}}$ (out of $M_2$) SSI message indices. 
Let ${\mathrm{R}_{i^{*}}\subseteq \mathrm{R}}$ and ${\mathrm{S}_{i^{*}}\subseteq \mathrm{S}}$ be the set of the RSI message indices and the set of the SSI message indices thar are assigned to the part $i^{*}$, respectively. 
Then, the user can recover $\mathrm{X}_{\mathrm{W}}$ by subtracting off the contribution of $\mathrm{X}_{\mathrm{R}_{i^{*}}}$ and $\mathrm{X}_{\mathrm{S}_{i^{*}}}$ from $\mathrm{Y}_{i^{*}}$.

\bibliographystyle{IEEEtran}
\bibliography{PIR_PC_Refs}

\end{document}